\newcommand{\commentout}[1]{}
\def\tO{\tilde{O}}
\def\epsilon{\varepsilon}
\def\eps{\varepsilon}
\newcommand{\alert}[1]{\textbf{\color{red}
		[[[#1]]]}\marginpar{\textbf{\color{red}**}}\typeout{ALERT:
		\the\inputlineno: #1}}
\def\eps{\epsilon}
\def\tO{\tilde{O}}
\newcommand{\R}{\mathbb{R}}
\newcommand{\E}{{\mathbb{E}}}
\newcommand{\mommit}[1]{}
\newcommand{\namedref}[2]{\hyperref[#2]{#1~\ref*{#2}}}
\newcommand{\theoremref}[1]{\namedref{Theorem}{#1}}
\newcommand{\figureref}[1]{\namedref{Figure}{#1}}
\theoremstyle{plain}
\newtheorem{theorem}{Theorem}
\newtheorem{lemma}{Lemma}
\newtheorem{remark}{Remark}
\newtheorem{claim}[lemma]{Claim}
\theoremstyle{definition}
\title{Almost Shortest Paths with Near-Additive Error in Weighted Graphs}
\begin{document}

\author[1]{Michael Elkin}
\author[1]{Yuval Gitlitz}
\author[1]{Ofer Neiman}

\affil[1]{Department of Computer Science, Ben-Gurion University of the Negev,
	Beer-Sheva, Israel. Email: \texttt{\{elkinm,neimano\}@cs.bgu.ac.il}, \texttt{gitlitz@post.bgu.ac.il }}

\maketitle
\begin{abstract}
Let $G=(V,E,w)$ be a weighted undirected graph with $n$ vertices and $m$ edges, and fix a set of $s$ sources $S\subseteq V$. We study the problem of computing {\em almost shortest paths} (ASP) for all pairs in $S \times V$ in both classical centralized and parallel (PRAM) models of computation. Consider the regime of multiplicative approximation of $1+\eps$, for an arbitrarily small constant $\eps > 0$ (henceforth $(1+\eps)$-ASP for $S \times V$). In this regime existing centralized algorithms require 
$\Omega(\min\{|E|s,n^\omega\})$ time, where $\omega < 2.372$ is the matrix multiplication exponent. Existing PRAM algorithms with polylogarithmic depth (aka time) require work $\Omega(\min\{|E|s,n^\omega\})$.

In a bold attempt to achieve centralized time close to the lower bound  of $m + n s$, Cohen \cite{C00} devised an algorithm which, in addition to the multiplicative stretch of $1+\eps$, allows also {\em additive error} of $\beta \cdot W_{\max}$, where 
$W_{\max}$ is the maximum edge weight in $G$ (assuming that the minimum edge weight is 1), and $\beta = (\log n)^{O({\frac{\log 1/\rho}{\rho}})}$ is polylogarithmic in $n$. It also depends on the (possibly) arbitrarily small parameter $\rho >0$ that determines the running time $O((m + ns)n^\rho)$ of the algorithm.

The tradeoff of \cite{C00} was improved in \cite{E01}, whose algorithm has similar approximation guarantee and running time, but its $\beta$ is $(1/\rho)^{O({\frac{\log 1/\rho}{\rho}})}$. However, the latter algorithm produces distance estimates rather than actual approximate shortest paths.
Also, the additive terms in \cite{C00,E01} depend linearly on a possibly quite large {\em global} maximum edge weight $W_{\max}$. 

In the current paper we significantly improve this state of affairs. Our centralized algorithm has running time $O((m+ ns)n^\rho)$, and its PRAM counterpart has polylogarithmic depth and work $O((m + ns)n^\rho)$, for an arbitrarily small constant $\rho > 0$. 
For  a pair $(s,v) \in S\times V$, it provides a path of length $\hat{d}(s,v)$ that satisfies 
$\hat{d}(s,v) \le (1+\eps)d_G(s,v) + \beta \cdot W(s,v)$, where $W(s,v)$ is the weight of the heaviest edge on some shortest $s-v$ path. Hence our additive term depends linearly on a {\em local} maximum edge weight, as opposed to the {\em global} maximum edge weight in \cite{C00,E01}. Finally, our $\beta = (1/\rho)^{O(1/\rho)}$, i.e., it is significantly smaller than in \cite{C00,E01}.

We also extend a centralized algorithm of Dor et al. \cite{DHZ00}. For a parameter $\kappa = 1,2,\ldots$, this algorithm provides for {\em unweighted} graphs a purely additive approximation of $2(\kappa -1)$ for {\em all pairs shortest paths} (APASP) in time $\tO(n^{2+1/\kappa})$.
Within the same running time, our algorithm for {\em weighted} graphs provides a purely additive error of $2(\kappa - 1) W(u,v)$, for every vertex pair $(u,v) \in {V \choose 2}$, with $W(u,v)$ defined as above.

On the way to these results we devise a suite of novel constructions of spanners, emulators and hopsets.
\end{abstract}
\thispagestyle{empty}
\newpage
\setcounter{page}{1}
\section{Introduction}


We study the problem of computing {\em almost shortest paths} from a set $S \subseteq V$ of designated vertices, called {\em sources}, to all other vertices in an $n$-vertex $m$-edge weighted undirected graph $G = (V,E,w)$, with non-negative edge weights. 
We aim at approximation guarantee of the form $(1+ \eps,\beta \cdot W)$, meaning that for every pair $(s,v) \in S \times V$, our algorithm will return a path of length $\hat{d}(s,v)$ (called {\em distance estimate}) that satisfies
\begin{equation}
\label{eq:near_add}
d_G(s,v) \le \hat{d}(s,v) \le (1+ \eps) d_G(s,v) + \beta \cdot W(s,v)~, 
\end{equation}
where $W(s,v)$ is the weight of the heaviest edge on some shortest $s-v$ path in $G$ (If there are multiple shortest paths, pick the one with minimal $W(s, v)$). 

Here $\eps > 0$ is an arbitrarily small positive constant, and $\beta$ is typically a large constant that depends on $\eps$, and on the running time of the algorithm.\footnote{In the introduction we will mostly suppress the dependence on $\eps$. It can however be found in the technical part of the paper.}
We call this problem {\em $(1 + \eps,\beta \cdot W)$-ASP for $S \times V$}. 

The problem of computing approximate shortest paths is one of the most central, fundamental and well-researched problems in Graph Algorithms. We study this problem in both the classical centralized and in the parallel (PRAM CRCW) models of computation.
Next, we overview the main previous results for this problem in these two settings, and describe our new results.
We start with the centralized model, and then turn our attention to the PRAM model.

\subsection{Centralized Setting}
\label{sec:centr}

The classical algorithm of Dijkstra for the exact {\em single-source shortest path} (SSSP) problem provides running time of $O(m + n \log n)$ \cite{FT87}. Thorup \cite{T97} devised an algorithm with running time $O(m + n\log\log n)$ for the case when all edge weights are integer. Using this algorithm separately from each of the $s = |S|$ sources results in running time of at least $(m \cdot s)$. 

On the opposite end of the spectrum, one can compute $(1+\eps)$-APASP (All Pair Approximate Shortest Paths) using matrix multiplication in $\tO(n^\omega)$ time \cite{GM97,AGM97,Z02}, where $\omega < 2.372$ is the matrix multiplication exponent \cite{CW90,W12,G14}. 
There are also a number of combinatorial (i.e., not exploiting fast matrix multiplication) APASP algorithms. 
In particular, Cohen and Zwick \cite{CZ01} showed that 3-APASP can be computed in $\tO(n^2)$ time. (They also provided a few additional algorithms with approximation ratio between 2 and 3 and running time greater than $n^2$.)
Baswana and Kavitha \cite{BK06} improved their approximation guarantee to $(2,W)$ (with the same running time of $\tO(n^2)$),
and
with $W$ defined as in (\ref{eq:near_add}). 

Finally, Cohen \cite{C00} devised an algorithm that for an arbitrarily small parameter $\rho > 0$, solves $(1+\eps,\beta \cdot W_{\max})$-ASP for $S \times V$ in $\tO((m + s\cdot n) \cdot n^\rho)$ time\footnote{The running time of the algorithm of \cite{C00}, as well as of the algorithm of \cite{E01} and of our algorithm, is actually slightly better than $O((m+sn)n^\rho)$. Specifically, it is roughly $O(m n^\rho + s n^{1+1/2^{1/\rho}})$. We use the simpler expression of $O((m + sn)n^\rho)$ to simplify presentation. More precise and general bounds can be found in the technical part of the paper.}, with 
$\beta = \beta_{Coh} = (\log n)^{O({{\log (1/\rho)} \over \rho})}$. Here $W_{\max}$ is the maximum edge weight in the entire graph (assuming the minimum edge weight is 1). 

This result was improved by Elkin \cite{E01}. The algorithm of Elkin \cite{E01} also solves $(1+\eps,\beta \cdot W_{\max})$-ASP for $S \times V$ in $\tO((m + sn)n^\rho)$ time, with $\beta = \beta_{Elk} = (1/\rho)^{O(\frac{\log(1/\rho)}{\rho})}$. 
This algorithm 
 reports distance estimates, rather than actual paths.\footnote{There is a variant of the algorithm of \cite{E01} which reports actual paths, but requires time $\tO((m + sn)n^\rho \cdot W_{\max})$. This running time is  typically prohibitively large as it depends linearly on $W_{\max}$.} 

The running time of \cite{C00,E01} is close (up to $n^\rho$, for an arbitrarily small constant $\rho >0$) to the lower bound of $\Omega(m + n\cdot s)$. This is unlike other aforementioned algorithms, whose running time is much larger, i.e., 
$\Omega(\min\{m \cdot s,n^2\})$. However, the algorithms of \cite{C00,E01} suffer from a number of drawbacks.
First, their additive term is linear in the maximum edge weight  $W_{\max}$. Second, the coefficient $\beta$ in them is quite large, even for a relatively large values of the parameter $\rho$. Third, as was mentioned above, 
the algorithm of \cite{E01}
  returns distance estimates, as opposed to actual paths that implement these estimates, and in the algorithm of \cite{C00} the coefficient $\beta$ of the additive term is super-constant (specifically, polylogarithmic in $n$). 

In the current paper we address these issues, and devise an algorithm for \((1 + \eps,\beta\cdot W)\)-ASP for \(S \times V\) with running time \(O((m +s n)n^\rho)\), for an arbitrarily small parameter \(\rho > 0\), with \(\beta = (1/\rho)^{O(1/\rho)}\).
This algorithm does report paths, rather than just distance estimates. Note also that the additive term grows linearly with the {\em local} maximum edge weight, i.e., with the weight of heaviest edge on each particular source-destination shortest path, as opposed to the {\em global}  maximum edge weight $W_{\max}$.
Finally, its coefficient $\beta$ is significantly smaller than $\beta_{Elk} = (1/\rho)^{O({{\log (1/\rho)}  \over \rho})}$, though it is admittedly still quite large\footnote{We are able to further decrease $\beta$ to $2^{O(1/\rho)}$, at the expense of increasing the multiplicative stretch from $1+\eps$ to $3+\eps$.}. (The coefficient $\beta_{Coh}$ of \cite{C00} depends polylogarithmically on $n$, while the coefficient $\beta$ in \cite{E01} and here are independent of $n$.)  

We also extend an algorithm of Dor et al. \cite{DHZ00} to weighted graphs. Specifically, the algorithm of \cite{DHZ00} works for {\em unweighted} undirected graphs. For any parameter $\kappa = 1,2,\ldots$, it provides an additive $2(\kappa - 1)$-APASP in $\tO(n^{2+1/\kappa})$ time. Our extension applies to {\em weighted} undirected graphs.
It computes additive $2(\kappa-1)W$-approximation for all pairs shortest paths within the same time $\tO(n^{2+1/\kappa})$, i.e., for any vertex pair $u,v \in V$, it produces a path of length at most $d_G(u,v) + 2(\kappa - 1)\cdot W(u,v)$, where $W(u,v)$ is as in (\ref{eq:near_add}). 

Note that the linear dependence of additive error on $W$ is unavoidable, as an algorithm with stretch $(1+\eps,o(W))$ can be translated into an algorithm with the same running time and with a purely multiplicative stretch of $1+\eps$.

\subsection{Parallel Setting}
In the PRAM model, multiple processors are connected to a single memory block, and the operations
are performed in parallel by these processors. We will mostly be concerned with the Concurrent Read Concurrent Write (CRCW) PRAM model, 
that allows multiple processors to access any memory cell at any given round. The running time is measured by the number of rounds, and the work by
the number of processors multiplied by the number of rounds.

Early algorithms for these problems \cite{UY91,KS93,SS99,S97} require $\Omega(\sqrt{n})$ parallel time.
Algorithms of \cite{GM97,AGM97,Z02}, that were discussed in Section \ref{sec:centr},
can also be  applied in PRAM. They provide  $(1+\eps)$-APASP in polylogarithmic time and $\tO(n^\omega)$ work. The algorithm 
of Cohen \cite{C00}, for a parameter $\rho > 0$, solves $(1+\eps)$-ASP for $S \times V$ in polylogarithmic time 
$(\log n)^{O({{\log (1/\rho)} \over \rho})}$ and work $\tO((m + n^{1+\rho}) s + m \cdot n^\rho)$.

This tradeoff was then improved in \cite{EN16,EN19hop}, where the running time is 
$(\log n)^{O(1/\rho)}$, and the work is the same as in \cite{C00}. Further spectacular progress was recently achieved 
by \cite{L20,ASZ20}, who devised $(1+\eps)$-SSSP algorithms with time $(\log n)^{O(1)}$ and work $\tO(m)$. Nevertheless, for $(1+\eps)$-ASP problem from the set $S \subseteq V$ of sources, one needs to run these algorithms in parallel from all the $s$ sources. As a result, their work complexity becomes $\tilde{\Theta}(m s)$. 

To summarize, all existing solutions for the problem with polylogarithmic time have work complexity $\Omega(\min\{m \cdot s,n^\omega\})$. We devise the first algorithm with polylogarithmic time $(\log n)^{O(1/\rho)}$ and work complexity 
$\tO((m + n s) n^\rho)$, for an arbitrarily small constant $\rho > 0$. In other words,  our work complexity is within $n^\rho$, for an arbitrarily small constant $\rho>0$, close to the lower bound of $\Omega(m + ns)$. 
On the other hand, unlike algorithms of \cite{Z02,C00,EN16,EN19hop,L20,ASZ20}, whose approximation guarantee is a purely multiplicative $1 + \eps$, for an arbitrarily small $\eps > 0$, our approximation guarantee is $(1+\eps,\beta\cdot W)$, with
$\beta = (1/\rho)^{O(1/\rho)}$, and $W$ as in (\ref{eq:near_add}). 

Moreover, our result can, in fact, be viewed as a {\em PRAM distance oracle}. Specifically, following the preprocessing that requires 
time $(\log n)^{O(1/\rho)}$ and work $\tO(m n^\rho)$, our algorithm stores a compact data structure of size $\tO(n^{1+\rho})$.
Given a query  vertex $s$, this data structure provides distance estimates $\hat{d}(s,v)$ for all $v \in V$, which satisfies (\ref{eq:near_add}) in {\em constant time} and using work $\tO(n^{1+\rho})$, where $\beta = (1/\rho)^{O(1/\rho)}$.
Note that the distance oracle has size arbitrarily close to linear in $n$, its preprocessing time is polylogarithmic and preprocessing work is arbitrarily  close to linear in $m$, and the query time is constant and the query work is arbitrarily close to linear in $n$.
(Note that as the query provides distance estimates for $n$ vertex pairs $\{(s,v) \mid v \in V\}$, its query work complexity must be $\Omega(n)$.) 


\subsection{Hopsets, Spanners and Emulators}

From the technical viewpoint, these results are achieved via a combination of our new constructions of emulators, spanners and hopsets. 
For parameters $\alpha,\beta\ge 1$, we say that a graph $H=(V,E', w)$ is an {\em $(\alpha,\beta)$-hopset} for a (weighted) graph $G=(V,E, w)$, if by adding $E'$ to the graph, every pair $x,y\in V$ has an $\alpha$-approximate shortest path consisting of at most $\beta$ hops; Formally,
\[
d_G(x,y)\le d_{G\cup H}^{(\beta)}(x,y)\le \alpha\cdot d_G(x,y)~,
\]
where $d_{G\cup H}^{(\beta)}$ is the shortest path in $G\cup H$ containing at most $\beta$ edges. The parameter $\alpha$ is called the {\em stretch}, and $\beta$ is the {\em hopbound}. 

We say that $H$ is an $(\alpha,\beta)$-{\em emulator} if for every $x,y\in V$,
\[
d_G(x,y)\le d_H(x,y)\le \alpha\cdot d_G(x,y)+\beta~,
\]
and $H$ is a {\em spanner} if it is an emulator and a subgraph of $G$.

Hopsets and near-additive spanners are fundamental combinatorial constructs, and play a major role in efficient approximation of shortest paths in various computational models. These objects have been extensively investigated in recent years \cite{EP04,E04,TZ06,EZ06,P09,P10,BKMP10,EN19span,KS97,C00,B09,N14,HKN14,MPVX15,HKN16,FL16,EN16,ABP17,EN19hop,HP19}. The main interest is to understand the triple tradeoff between the size of the hopset (respectively, spanner), to the stretch $\alpha$, and to the hopbound (resp., additive stretch) $\beta$. For algorithmic applications, it is also crucial to bound the construction time of the hopset/spanner/emulator.


We show near-additive spanners for {\em weighted} graphs, where the additive stretch for the pair $x,y$ may depend also on the largest edge weight on the corresponding shortest path from $x$ to $y$,  $W(x,y)$. For a parameter $0<\rho<1$, we devise an algorithm that constructs a $(1+\epsilon,\beta \cdot W)$-spanner of size $O(n^{1+1/2^{1/\rho}}+ n/\rho)$ with $\beta\le  \left(\frac{1/\rho}{\epsilon}\right)^{O(1/\rho)}$. We also show how to analyze the construction so that it yields smaller additive stretch, while increasing the multiplicative one. Specifically, we get a
$(c,\beta\cdot W)$-spanner of same size as above, for every constant $c>3$, and with $\beta= \left(\frac{1}{\epsilon}\right)^{O(1/\rho)}$. 
Our emulators have a somewhat improved $\beta$. 
All of our results admit near-linear time algorithms, i.e., their running time is $O(|E|n^\rho)$, for an arbitrarily small constant $\rho>0$.


\subsection{Technical Overview}

We adapt the constructions of \cite{TZ06,P09,EN19span,EN19hop,HP19} of hopsets, spanners and emulators so that they are suitable for weighted graphs, and provide an improved additive stretch (or hopbound) $\beta$. The basic idea in all these constructions is to generate a random hierarchy of vertex sets $V=A_0\supseteq A_1\supseteq\dots\supseteq A_k=\emptyset$, where for each $0\le i<k-1$, each element in $A_i$ is included in $A_{i+1}$ with probability $\approx n^{-2^{i-k}}$ (one should think of $k=1/\rho$). We also refer to vertices at $A_i$ as vertices at level $i$. For each $v\in V$, the pivot $p_i(v)$ is the closest vertex in $A_i$ to $v$. Then the set of edges $H$ is created by connecting, for every $0\le i\le k-1$, every vertex $v\in A_i\setminus A_{i+1}$ to its {\em bunch}: all other vertices in $A_i$ that are closer to it than $p_{i+1}(v)$. One difference in our construction is that we also connect each vertex to all its (at most $k$) pivots. The main technical innovation in this work is the new analysis of this construction, yielding various hopsets, emulators and spanners that apply for {\em weighted} graphs, and admit improved parameters. (Previous constructions of spanners and emulators \cite{EP04,TZ06,P09,EN19span, HP19} applied only for unweighted graphs.)

The previous analysis of the stretch for some pair $x,y\in V$ goes roughly as follows. Divide the $x-y$ path into intervals, and try to connect these intervals using low stretch paths in $H$ (and for hopsets, also some of the graph edges, but with few hops). Each interval can either have a low stretch path; or fail, in which case that interval admits a nearby pivot (of some level $i$). Then, consider two failed intervals (the leftmost and rightmost ones), and try to find an $x-y$ path via the pivots of these intervals.



We note that this partitioning of the $x-y$ path to equal size intervals (used in previous works), cannot work directly for weighted graphs, since it may not be possible to divide the path to  intervals of equal (or even near-equal) size. We provide a subtle adaptation of this technique so that it can handle weights. In particular, we distinguish between short and long distances: The sufficiently long distances may suffer a partition to un-equal intervals, as the induced error is dominated by the multiplicative stretch of  $1+\epsilon$. For the short distances, we stop this partitioning when it becomes too ''expensive", and resort to an argument similar to the one in \cite{TZ01}, which has large multiplicative stretch (of roughly $2^{1/\rho}$). However, at the point where we stop, that stretch can be accounted for by the additive stretch $\beta\cdot W$.

An additional ingredient in our new analysis of $H$ as an emulator/spanner for weighted graphs, given a pair $x,y\in V$, is to iteratively find a vertex $z$ on the $x-y$ shortest path (sufficiently far from $x$) that admits in $H$ a path with low multiplicative stretch from $x$. When there is no such $z$, we show that we can reach $y$ with small additive stretch. This technique can be used to obtain the improved dependence of $\beta$ on the parameter $k=1/\rho$.

Recall that emulators are insufficient for reporting paths. In particular, the approach of \cite{C00,E01} was based on emulators, rather than on spanners, and it is not clear if these algorithms can be adapted to build spanners (for weighted graphs).
On the other hand, with our approach we can convert our constructions of emulators into constructions of spanners.  
Specifically, to build a spanner, we  must use  graph edges. So in order to connect vertices $v\in A_i\setminus A_{i+1}$ to the vertices in their bunch $B(v)$, we need to add paths of possibly many edges in the graph (rather than a single edge, as for emulators/hopsets). 
To do this we connect every vertex $v$ to all vertices in its {\em half-bunch} (see Section \ref{sec:spanner} for its definition), as opposed to connecting it to all vertices in its full bunch. (The latter is the case in the construction of emulators.) This turns out to be sufficient to ensure that the union of all these paths does not contain too many edges. For this analysis we employ ideas of counting pairwise path intersections, developed in the context of distance preservers \cite{CE05} and near-additive spanners and distance oracles for unweighted graphs \cite{P09,EP15}. We simplify this approach, and extend it to weighted graphs.


\subsection{Organization}

In Section \ref{sec:const} we describe the construction of our emulators and hopsets. In Section \ref{sec:emulator} we analyze this construction, showing it provides emulators for weighted graphs.
In Section \ref{sec:spanner} we describe the construction of our spanners for weighted graphs (proofs are deferred to Appendix \ref{appendix:spanner_proofs}) and use them for our centralized algorithms for the ASP problem.
In Section \ref{sec:apps} we provide  efficient implementations of our constructions, and use them in Section \ref{sec:ASP} for solving ASP in PRAM, and for PRAM distance oracles.

Our hopsets and emulators with improved $\beta$ appear in Appendices \ref{sec:hopset} and \ref{sec:3+eps-emul}.
In Appendix \ref{sec:dhz} we devise an algorithm for pure additive APASP for weighted graphs. 


\subsubsection{Bibliographical note}
Related results about $(\alpha,\beta)$-spanners for {\em unweighted} graphs and $(\alpha,\beta)$-hopsets with $\alpha \ge 3+\eps$ were achieved independently of us and simultaneously by \cite{BLP20}. 
In another submission \cite{EN20} 
, an $(1+\eps)$-ASP algorithm for $S \times V$ with $|S| = n^r$,
for some $0 < r \le 1$. was devised. The running time of this algorithm in the centralized setting is $\tO(n^{\omega(r)})$, where
$\omega(r)$ is the rectangular matrix multiplication exponent. ($n^{\omega(r)}$ is the time required to multiply an $n^r \times n$ matrix by an $n \times n$ one.) In PRAM setting that algorithm runs in polylogarithmic time and  $\tO(n^{\omega(r)})$ work.
For graphs $G = (V,E,w)$ and sets of sources  $S \subseteq V$ of size $s$ such that $m + ns = o(n^2)$, the algorithms that we devise in the current submission are more efficient than in \cite{EN20}.

Following our work, \cite{ABDKS20} devised algorithms for purely additive spanners (e.g., spanners with multiplicative stretch exactly 1) for weighted graphs. In their results, the additive term depends on the global maximal edge weight $W_{\max}$, and the size of their spanners is always $\Omega(n^{4/3})$ (the latter is unavoidable for purely additive spanners, due to a lower bound of \cite{AB16}).


\section{Construction}\label{sec:const}
We use a similar construction to that of \cite{TZ06,EN19hop,HP19}. One difference is that every vertex connects to pivots in all levels. (Recall that the main difference is in the analysis.) 
Let $G = (V, E, w)$ be a weighted graph with $n$ vertices, and fix an integer parameter $k \geq 1$.
Let $\nu = 1/(2^k-1)$.
Let $A_0\dots A_k$ be sets of vertices such that $A_0 = V$, $A_k = \emptyset$, and for $0 \leq i \leq k-2$, $A_{i+1}$ is created by sampling independently every element from $A_i$ with probability $q_i = n^{-2^i\nu}\cdot 2^{-2^i -1}$.
For every $0 \leq i \leq k-1$, the expected size of $A_i$ is:

\[ N_i = E[|A_i|] = n \prod_{j=0}^{i-1}q_j = n^{1-(2^i-1)\nu} \cdot 2^{-2^{i} - i + 1}\]

For every $i \in [k-1]$, define the pivot $p_i(v)$ to be the closest vertex in $A_i$ to $v$, breaking ties by lexicographic order. For every $u \in A_i \backslash A_{i+1}$ define the bunch (see \figureref{fig:bunch})
\begin{equation}\label{eq:bunch}
B(u) = \{v \in A_i | d_G(u, v) < d_G(u, A_{i+1})\} \cup \{p_j(u) | i < j < k \}~.
\end{equation}
That is, the bunch $B(u)$ contains all the vertices which are in $A_i$ and closer to $u$ than $p_{i+1}(u)$, and at most $k$ pivots.
We then define $H = \{(u, v) ~:~ u \in V, v \in B(u)\}$, where the weight of the edge $(u, v)$ is set as the weight of the shortest path between $u,v$ in $G$.

\begin{figure}[ht]
	\centering
	\includegraphics[scale=0.3]{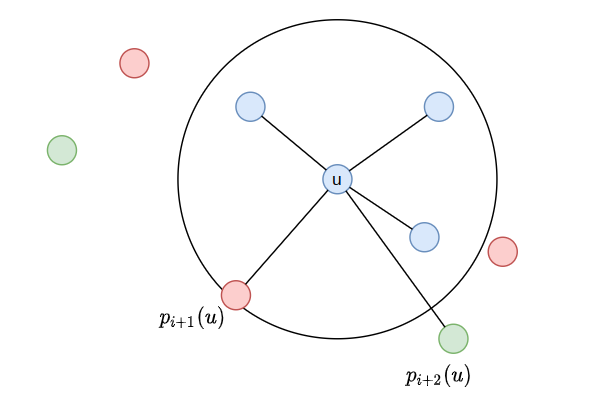}	
	\caption{The bunch of $u$. Here vertices in $A_i \backslash A_{i+1}$ are colored in blue, in $A_{i+1} \backslash A_{i+2}$ are colored in red and in $A_{i+2}$ are colored in green.}
	\label{fig:bunch}
\end{figure}

\begin{lemma}
	\label{lem:size_hopset_emu}
	The size of $H$ is 
	\begin{align*}
	|H| = O(kn + n^{1+\nu})
	\end{align*}
\end{lemma}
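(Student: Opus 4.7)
The plan is to split $B(u)$ into its two defining pieces and bound each contribution separately. For $u \in A_i \setminus A_{i+1}$, write $B(u) = B_{\text{close}}(u) \cup B_{\text{piv}}(u)$, where $B_{\text{close}}(u) = \{v \in A_i : d_G(u,v) < d_G(u,A_{i+1})\}$ and $B_{\text{piv}}(u) \subseteq \{p_{i+1}(u),\dots,p_{k-1}(u)\}$ has size at most $k$. Summing $|B_{\text{piv}}(u)|$ over all vertices $u \in V$ trivially yields at most $kn$ edges, which accounts for the additive $O(kn)$ term in the bound.

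Next I would bound $\mathbb{E}[|B_{\text{close}}(u)|]$ for a vertex $u$ that falls at level $i$ (i.e., $u \in A_i\setminus A_{i+1}$). This is the classical Thorup--Zwick bunch estimate: order the vertices of $A_i$ by their distance from $u$ (breaking ties by the fixed lexicographic rule) as $v_1, v_2, \dots$, and note that each $v_j$ is independently sampled into $A_{i+1}$ with probability $q_i$. Then $B_{\text{close}}(u)$ consists exactly of those $v_j$ that appear before the first index whose vertex is sampled into $A_{i+1}$, so $\mathbb{E}[|B_{\text{close}}(u)|] \leq 1/q_i$. Care is needed to justify that this holds conditioned on $u\in A_i\setminus A_{i+1}$, which follows since the sampling from $A_i$ to $A_{i+1}$ is independent across vertices.

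Putting the pieces together, the expected number of bunch-type edges at level $i$ is at most $N_i/q_i$. Plugging in the formulas,
\[
\frac{N_i}{q_i} \;=\; n^{1-(2^i-1)\nu}\cdot 2^{-2^i-i+1} \cdot n^{2^i\nu}\cdot 2^{2^i+1} \;=\; n^{1+\nu}\cdot 2^{2-i}~.
\]
Summing over $i=0,\ldots,k-1$ gives a geometric series bounded by $O(n^{1+\nu})$, and combining with the pivot contribution yields $\mathbb{E}[|H|] = O(kn + n^{1+\nu})$.

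I don't anticipate a real obstacle here; the only subtle point is verifying the Thorup--Zwick-style bound on $|B_{\text{close}}(u)|$ given that $u$ itself is conditioned to lie in $A_i\setminus A_{i+1}$. Because the inclusion of each candidate $v_j \in A_i\setminus\{u\}$ into $A_{i+1}$ is independent of whether $u$ is sampled, conditioning on $u\in A_i\setminus A_{i+1}$ does not change the distribution of the stopping index, and the geometric-mean argument goes through cleanly.
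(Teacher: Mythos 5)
Your argument is correct and matches the paper's for levels $0\le i\le k-2$, but it silently breaks at the top level $i=k-1$, and that is exactly the part of the proof that needs a separate idea. In the construction $A_k=\emptyset$ is set deterministically; the sampling probabilities are only defined for $0\le i\le k-2$, so $q_{k-1}$ does not exist and your formula $N_i/q_i$ is meaningless for $i=k-1$. More to the point, for $u\in A_{k-1}$ there is no pivot $p_k(u)$, so $d_G(u,A_k)=\infty$ and the bunch of $u$ is \emph{all} of $A_{k-1}$: the stopping-time/geometric argument gives no bound here. The last level therefore contributes $\binom{|A_{k-1}|}{2}$ edges, and since $|A_{k-1}|$ is random you cannot just square its expectation; you need a second-moment bound. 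The paper handles this by writing
\[
\E\left[\binom{|A_{k-1}|}{2}\right]\le \E\bigl[|A_{k-1}|^2\bigr]=\E[|A_{k-1}|]^2+\mathrm{Var}(|A_{k-1}|)\le N_{k-1}^2 + n\prod_{j=0}^{k-2}q_j~,
\]
and then checks that $N_{k-1}^2=O(n^{1+\nu})$ because the exponents were chosen so that $2-2(2^{k-1}-1)\nu\le 1+\nu$ (this is where the constant factors $2^{-2^i-1}$ in the $q_i$ also earn their keep). Equivalently, since $|A_{k-1}|$ is binomial you could use $\E[|A_{k-1}|(|A_{k-1}|-1)]=n(n-1)\prod_j q_j^2$ directly. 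Either way, you must add this case; without it the lemma is not proved. The rest of your write-up, including the observation that conditioning on $u\in A_i\setminus A_{i+1}$ does not affect the sampling of the other vertices of $A_i$ into $A_{i+1}$, is fine and is essentially the paper's argument.
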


The proof of Lemma \ref{lem:size_hopset_emu} is similar to previous works, we include it for completeness in appendix \ref{size_analysis_hop_emu}.

\section{Near-Additive Emulators for Weighted Graph}\label{sec:emulator}


In this section we will show that $H$ as defined in Section \ref{sec:const} is a $(1 + \varepsilon,\beta\cdot W)$-emulator for weighted graphs. Note that the construction there does not depend on $\varepsilon$, and indeed, $H$ will be an emulator for all values of $0<\varepsilon <1$ simultaneously (with $\beta=\beta(\varepsilon)$ depending on $\varepsilon$). 


Let $G = (V, E)$ be a weighted graph with non-negative weights $w:E\to\R_+$, recall that for $x,y\in V$ we have $W(x,y) = \max\{w(e) ~:~ e \in P_{xy}\}$ (where $P_{xy}$ is a shortest path from $x$ to $y$ in $G$). Let $k \geq 1$ and $\Delta > 3$ be given parameters (think of $\Delta=3+O(k/\epsilon)$).
We begin by proving two lemmas, handling long and short distances, respectively. The first lemma asserts that pairs which are sufficiently far apart admit either a low stretch path, or a nearby pivot of a higher level.
\begin{lemma}\label{1epsemu_constant_lemma}
	Fix $\Delta >3$. Let $0 \leq i < k$ and let $x,y \in V$ such that $d_G(x,y) \geq (3 \Delta) ^ iW(x,y)$. Then at least one of the following holds:
	\begin{enumerate}
		\item $d_H(x, y) \leq (1 + \frac{4i}{\Delta - 3}) d_G(x, y)$
		\item $d_H(x, p_{i+1}(x)) \leq \frac{\Delta}{\Delta - 3} d_G(x, y)$
	\end{enumerate}
	(For $i=k-1$, the first item must hold since $p_{i+1}$ doesn’t exist).
\end{lemma}
\begin{proof}
	Denote $W=W(x,y)$.
	The proof is by induction on $i$. For the base case $i=0$, if $y \in B(x)$ then $d_H(x,y)=d_G(x,y)$ and the first item holds. Otherwise, if $x\in A_1$ then $d_G(x,p_1(x))=0$, so the second item holds. The last case is that $x\in A_0\setminus A_1$ and $y\notin B(x)$, then by (\ref{eq:bunch}) we have $d_H(x, p_1(x)) =d_G(x, p_1(x)) \leq d_G(x, y)\le\frac{\Delta}{\Delta - 3} d_G(x, y) $, thus the second item holds. Assume the lemma holds for $i$ and prove for $i+1$. Let $x,y\in V$ be a pair of vertices such that  $d_G(x,y) \geq (3 \Delta) ^ {i+1}W(x,y)$
	
	Divide the shortest path between $x$ and $y$ into $J$ segments $\{L_j = [u_{j}, u_{j+1}]\}_{j \in [J]}$ of length at least $(3\Delta)^iW$ and at most $d_G(x,y)/\Delta$. It can be done as follows: define $u_1=x, j=2$ and walk on the shortest path from $x$ to $y$. Define $u_j$ as the first vertex which \(d_G(u_{j-1}, u_j) \geq (3\Delta)^iW\) or define $u_j = y$ if $d_G(u_{j-1}, y) < (3\Delta)^iW$. Increase $j$ by 1 and repeat. Note that each segment has length at most $(3\Delta)^iW+W$. Finally, we join the last two segments.
	The length of the last segment is at most $(3\Delta) ^ iW + W + (3\Delta)^iW \leq 3^{i+1}\Delta^iW\leq d_G(x,y)/\Delta$.
	The length of any segment except the last is also at most $(3\Delta) ^iW + W \le d_G(x,y)/\Delta$.
	
	Apply the induction hypothesis for every segment $L_j$ with parameter $i$. If for all the segments the first item holds, then first item holds for $x, y$ and $i+1$, since
	\begin{align*}
	d_H(x,y) &\leq \sum_{j\in J} d_H(u_j, u_{j+1}) 
	\leq \sum_{j \in J}(1 + \frac{4i}{\Delta - 3})d_G(u_j, u_{j+1}) \\
	&\leq (1 + \frac{4i}{\Delta - 3}) d_G(x,y).
	\end{align*}
	
	Otherwise, for at least one segment the second item holds. Let $L_l$ (resp., $L_{r-1}$) be the leftmost (resp., rightmost) segment for which the second item holds. 
	We have that the second item holds for the pair $u_l,u_{l+1}$, and by symmetry of the first item, the second item also holds for the pair $u_{r}, u_{r-1}$ with parameter $i$. Hence
	
	\begin{align}
	d_H(u_r, p_{i+1}(u_r)) \leq \frac{\Delta}{\Delta - 3} d_G(u_{r-1}, u_r) \leq \frac{d_G(x, y)}{\Delta - 3} , \nonumber \\
	d_H(u_l, p_{i+1}(u_l)) \leq \frac{\Delta}{\Delta - 3} d_G(u_{l}, u_{l+1})\leq \frac{ d_G(x, y)}{\Delta - 3}.
	\label{pi_1epsemul}
	\end{align}
	
	Consider first the case that $p_{i+1}(u_r) \in B(p_{i+1}(u_l))$. In this case $H$ contains the edge $\{p_{i+1}(u_l), p_{i+1}(u_r)\}$, so we have
	\begin{align}
	d_H&(p_{i+1}(u_l), p_{i+1}(u_r)) 
	\leq d_G(p_{i+1}(u_l), u_l) + d_G(u_l, u_r) + d_G(u_r, p_{i+1}(u_r)).
	\label{between_pi_1epsemu}
	\end{align}
	
	By the triangle inequality,
	\begin{align}
	d_H&(u_l, u_r) \\
	& \leq d_H(u_l, p_{i+1}(u_l)) + d_H(p_{i+1}(u_l), p_{i+1}(u_r)) + d_H(p_{i+1}(u_r), u_r) \nonumber\\
	& \stackrel{(\ref{between_pi_1epsemu})}{\leq} 2d_H (u_l, p_{i+1}(u_l)) + d_G(u_l, u_r) + 2d_H(p_{i+1}(u_r), u_r) \nonumber\\
	& \stackrel{(\ref{pi_1epsemul})}{\leq} \frac{4 d_G(x, y)}{\Delta - 3} + d_G(u_l, u_r).
	\label{1epsemu_lr}
	\end{align}
	
	Thus, the distance between $x$ and $y$ in $H$,
	\begin{align*}
	d_H(x, y) 
	& \leq \sum_{j \in [J]} d_H(u_j, u_{j+1}) \\
	&\leq \sum_{j=1}^{l-1} \left(1 + \frac{4 i}{\Delta - 3}\right)d_G(u_j, u_{j+1}) + d_H (u_l, u_r) 
	+ \sum_{j=r}^{J}\left(1 + \frac{4 i}{\Delta - 3}\right)d_G(u_j, u_{j+1}) \\
	&\leq \left(1 + \frac{4 i}{\Delta - 3}\right)d_G(x, u_l) + d_H (u_l, u_r) + \left(1 + \frac{4 i}{\Delta - 3}\right)d_G(u_r, y) \\
	&\stackrel{(\ref{1epsemu_lr})}{\leq} \left(1 + \frac{4 i}{\Delta - 3}\right)d_G(x, u_l) + \frac{4d_G(x, y)}{\Delta - 3}  + d_G (u_l, u_r) 
	+ \left(1 + \frac{4 i}{\Delta - 3}\right)d_G(u_r, y) \\
	& \leq \left(1 + \frac{4(i+1)}{\Delta - 3}\right)d_G(x, y),
	\end{align*}
	therefore the first item holds.
	
	The other case is that $p_{i+1}(u_r) \notin B(p_{i+1}(u_l))$. Then
	\begin{equation}\label{eq:no-pivot-2}
	d_H(p_{i+1}(u_l), p_{i+2}(p_{i+1}(u_l))) \leq d_G(p_{i+1}(u_l), p_{i+1}(u_r))~.
	\end{equation}
	We can bound the distance $d_H(x, p_{i+2}(x))$ by
	\begin{align*}
	d_H&(x, p_{i+2}(x)) \\
	& \leq d_G(x, u_l) + d_G(u_l, p_{i+1}(u_l)) + d_G(p_{i+1}(u_l), p_{i+2}(p_{i+1}(u_l))) \\
	& \leq d_G(x, u_l) + d_G(u_l, p_{i+1}(u_l)) + d_G(p_{i+1}(u_l), p_{i+1}(u_r)) \\
	& \stackrel{(\ref{between_pi_1epsemu})}{\leq } d_G(x, u_l) + d_G(u_l, p_{i+1}(u_l)) 
	+ d_G(u_l, p_{i+1}(u_l)) + d_G(u_l, u_r) + d_G(p_{i+1}(u_r), u_r) \\
	& \stackrel{(\ref{pi_1epsemul})}{\leq} d_G(x, y) + \frac{3 d_G(x,y)}{\Delta - 3} = \frac{\Delta}{\Delta - 3} d_G(x,y).
	\end{align*}
	Hence the second item holds.
\end{proof}
The previous lemma is useful for vertices which are very far from each other, since for $i = k-1$ the first item must hold. For vertices which are close to each other, we will need the following lemma.

\begin{lemma}\label{1epsemu_beta_lemma}
	Let $0 \leq i < k$ and fix $x,y \in V$. Let 
	
	$ m =\max\{d_G(x, p_i(x)),d_G(y, p_i(y)),d_G(x,y)\}$. Then at least one of the following holds:
	\begin{enumerate}
		\item $d_H(x,y) \leq 5m$
		\item $d_H(x, p_{i+1}(x)) \leq 4m$ and $d_H(y, p_{i+1}(y)) \leq 4m$
	\end{enumerate}
	(For $i=k-1$, the first item must hold since $p_{i+1}$ doesn’t exist).

\end{lemma}
\begin{proof}
	If $p_i(y) \in B(p_i(x))$ the first item holds, since
	\begin{align*}
	d_H(x, y) &\leq d_H(x, p_i(x)) + d_H(p_i(x), p_i(y)) + d_H(p_i(y), y) \\
	& \leq d_G(x, p_i(x)) + d_G(p_i(x), x) + d_G(x, y) 
	+ d_G(y, p_i(y)) + d_G(p_i(y), y) \\
	&\leq 5m~.
	\end{align*}
	
	If $p_i(y) \notin B(p_i(x))$, then $d_G(p_i(x), p_{i+1}(p_i(x))) \leq d_G(p_i(x), p_i(y))$, in this case the second item holds, as
	\begin{align*}
	d_H(x, p_{i+1}(x)) 
	&\leq d_H(x, p_i(x)) + d_H(p_i(x), p_{i+1}(p_i(x))) \\
	& \leq d_G(x, p_i(x)) + d_G(p_i(x), x) + d_G(x, y) + d_G(y, p_i(y)) \leq 4m~.
	\end{align*}
	
	The bound on $d_H(y, p_{i+1}(y))$ is symmetric.
\end{proof}

We are now ready to prove the following theorem.

\begin{theorem}\label{1epsemu_theorem}
	For any weighted graph $G = (V, E)$ on $n$ vertices, and any integer $k > 1$, there exists $H$ of size at most $O(kn + n^{1+1/(2^k - 1)})$, which is a $(1+\varepsilon, \beta\cdot W)$-emulator for any $0<\varepsilon <1$, where $\beta = O( \frac{k}{\varepsilon})^{k-1}$.
\end{theorem}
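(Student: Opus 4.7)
The size bound $|H|=O(kn+n^{1+1/(2^k-1)})$ is immediate from \lemmaref{lem:size_hopset_emu}, so the task is to verify the $(1+\varepsilon,\beta\cdot W)$ stretch guarantee. My plan is to fix an arbitrary pair $x,y\in V$ with $W=W(x,y)$, set the parameter $\Delta=3+4(k-1)/\varepsilon$ in \lemmaref{1epsemu_constant_lemma} so that $1+4(k-1)/(\Delta-3)=1+\varepsilon$, and then split on whether $d_G(x,y)$ is above or below the threshold $(3\Delta)^{k-1}W$.

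The long-distance case $d_G(x,y)\ge(3\Delta)^{k-1}W$ is a direct invocation of \lemmaref{1epsemu_constant_lemma} at level $i=k-1$: since $A_k=\emptyset$, no vertex $p_k(x)$ exists, so the second alternative of that lemma is vacuous, forcing item~1 and yielding $d_H(x,y)\le(1+\varepsilon)d_G(x,y)$.

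For the short-distance case $d_G(x,y)<(3\Delta)^{k-1}W$, I would perform an induction on the level $i=0,1,\ldots,k-1$, maintaining the invariant $m_i:=\max\{d_G(x,p_i(x)),\,d_G(y,p_i(y)),\,d_G(x,y)\}\le 4^i\cdot d_G(x,y)$, with base case $m_0=d_G(x,y)$ since $p_0(u)=u$ for every $u$. In the inductive step I would invoke \lemmaref{1epsemu_beta_lemma} twice at level $i$, once on the ordered pair $(x,y)$ and once on $(y,x)$. If item~1 fires in either copy, the induction terminates with $d_H(x,y)\le 5m_i\le 5\cdot 4^i d_G(x,y)$. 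Otherwise item~2 fires in both copies, producing simultaneously $d_H(x,p_{i+1}(x))\le 4m_i$ and $d_H(y,p_{i+1}(y))\le 4m_i$; since every edge of $H$ has weight equal to the corresponding $d_G$-distance we have $d_G\le d_H$, which upgrades the previous inequalities to $m_{i+1}\le 4m_i\le 4^{i+1}d_G(x,y)$. At the terminal level $i=k-1$, item~2 is again impossible (because $A_k=\emptyset$ forces $p_{k-1}(y)\in A_{k-1}\subseteq B(p_{k-1}(x))$), so item~1 must fire, giving $d_H(x,y)\le 5m_{k-1}\le 5\cdot 4^{k-1}d_G(x,y)<5(12\Delta)^{k-1}W$.

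Putting the two cases together yields $d_H(x,y)\le(1+\varepsilon)d_G(x,y)+\beta\cdot W$ with $\beta=5(12\Delta)^{k-1}=O(k/\varepsilon)^{k-1}$, as required, and the fact that the construction does not depend on $\varepsilon$ is then visible a posteriori. The main obstacle I expect is the symmetric use of \lemmaref{1epsemu_beta_lemma}: its statement only bounds $d_H(x,p_{i+1}(x))$, whereas the inductive invariant needs control over $d_H(y,p_{i+1}(y))$ as well, so I must argue that the lemma applies identically with the roles of $x$ and $y$ swapped (since both $m$ and the structural dichotomy are symmetric in the pair). A secondary wrinkle is cleanly handling the boundary level where $A_k=\emptyset$; this is resolved by observing that in that regime every level-$(k-1)$ pivot lies in every other level-$(k-1)$ vertex's bunch, so the ``bad'' alternative in both lemmas is ruled out at $i=k-1$.
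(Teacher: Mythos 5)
Your proposal is correct, and its overall architecture (the split at $(3\Delta)^{k-1}W$, the forced first alternative of \lemmaref{1epsemu_constant_lemma} at $i=k-1$ for long distances, and the iterative pivot-climbing via \lemmaref{1epsemu_beta_lemma} for short distances) matches the paper's. The one genuine difference is in the short-distance case: the paper first locates the scale $i$ with $(3\Delta)^iW\le d_G(x,y)<(3\Delta)^{i+1}W$ (using $d_G(x,y)\ge W$), invokes \lemmaref{1epsemu_constant_lemma} once more at that level to land at level-$(i+1)$ pivots already within $2d_G(x,y)$, and only then starts climbing, so its additive bound is $10\cdot 4^{k-i-2}(3\Delta)^{i+1}W\le 10(3\Delta)^{k-1}W$. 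You instead start the climb at level $0$ with $m_0=d_G(x,y)$ and never reuse \lemmaref{1epsemu_constant_lemma}; this is simpler (one fewer case split, no need to find the scale $i$) but pays an extra $4^{k-1}$ factor, giving $5(12\Delta)^{k-1}W$ — still $O(k/\varepsilon)^{k-1}W$, so the theorem is unaffected. A point in your favor: you explicitly flag that \lemmaref{1epsemu_beta_lemma} must be applied to both ordered pairs $(x,y)$ and $(y,x)$ to control $d_H(y,p_{i+1}(y))$ as well as $d_H(x,p_{i+1}(x))$ before concluding $m_{i+1}\le 4m_i$; the paper relies on the same symmetric application but leaves it implicit when it asserts that ``$m$ increases every iteration by a factor of at most 4.''
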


\begin{proof}
	Fix $\Delta = 3 + \frac{4(k-1)}{\varepsilon}, \beta = 10 (3 \Delta)^{k-1}$. Let $x, y \in V$, and $W=W(x,y)$.
	If $d_G(x, y) \geq (3\Delta)^{k-1}W$, we can apply Lemma \ref{1epsemu_constant_lemma} for $x, y$ and $i=k-1$. Since $p_{k}(x)$ does not exist, the first item must hold.
	Thus,
	\[
	d_H(x,y) \leq \left(1 + \frac{4(k-1)}{\Delta - 3}\right) d_G(x, y) = (1 + \varepsilon)d_G(x,y)~.
	\]

	Otherwise, take the integer $0\le i<k-1$ satisfying $(3\Delta)^iW\le d_G(x, y) < (3\Delta)^{i+1}W$ (note that there must be such an $i$, since $d_G(x,y)\ge W$). Apply Lemma \ref{1epsemu_constant_lemma} for $x, y$ and $i$. If the first item holds, we will get $1+\varepsilon$ stretch as before.
	
	Otherwise, the second item holds, and we know that $d_G(x, p_{i+1}(x)) \le\frac{\Delta}{\Delta-3}d_G(x,y)\le 2d_G(x,y)$ (using that $k\ge 2$). By symmetry of $x,y$ in the first item of Lemma \ref{1epsemu_constant_lemma}, we have $d_G(y,p_{i+1}(y))\le 2d_G(x,y)$ as well.
	Set $j=i+1$ and apply Lemma $\ref{1epsemu_beta_lemma}$ with $x, y, j$, noting that $m\le 2d_G(x,y)$. If the first item holds, we found a path in $H$ from $x$ to $y$ of length at most $5m\le 10 d_G(x,y)\le 10 (3 \Delta)^{k-1}W = \beta \cdot W$.
	
	If the second item holds, we increase $j$ by one and apply Lemma $\ref{1epsemu_beta_lemma}$ again. We continue this procedure until the first item holds. The fact that the second item holds implies that the bound $m$ (the maximal distance of $x,y$ to the level $j$ pivots) increases every iteration by a factor of at most 4.
	Since the first item must hold for $j=k-1$, the path we found is of length at most $10\cdot 4^{k-i-2} d_G(x,y) \le 10\cdot 4^{k-i-2}\cdot (3\Delta)^{i+1}W$, which is maximized for $i=k-2$. Hence the additive stretch is at most $10\cdot(3\Delta)^{k-1}W = O((9 + \frac{12(k-1)}{\varepsilon})^{k-1} W)$, as required.
\end{proof}

\begin{remark}
	We note that the analysis did not use the fact that the path from $x$ to $y$ is a shortest path. In particular, for every path $P$ from $x$ to $y$ of length $d$, we can obtain a path in $H$ of length at most $(1+\varepsilon)\cdot d+\beta\cdot W(P)$, where $W(P)$ is the largest edge weight in $P$.
\end{remark}

\section{Near-Additive Spanners for Weighted Graphs}\label{sec:spanner}

In this section we devise our spanners for weighted graphs. We first describe the new construction, that differs from that of Section \ref{sec:const} in several aspects, which are required in order to keep the size of the spanner under control (and independent of $W_{\max}$). In particular, since we add paths, rather than edges, between each vertex to other vertices in its bunch, we need to ensure the size of $H$ is small enough. To do that, we use half-bunches rather than bunches to define $H$ (see (\ref{eq:hb}) below), and show that there are few intersections between the aforementioned paths (see Lemma \ref{lem:paths}). One last ingredient is altering the sampling probabilities, so that the argument on intersection goes through. This approach refines and improves ideas from \cite{P09,EP15}.

\paragraph{Construction.}
Let $G = (V, E)$ be a weighted graph with $n$ vertices, and fix an integer parameter $k \geq 3$.
Let $\nu = \frac{1}{(4/3)^k-1}$.
Let $A_0\dots A_k$ be sets of vertices such that $A_0 = V$, $A_k = \emptyset$, and for $0 \leq i \leq k-2$, $A_{i+1}$ is created by sampling every element from $A_i$ with probability $q_i = n^{-4^i\nu/3^{i+1}}$.
For every $0 \leq i \leq k-1$, the expected size of $A_i$ is:

\[ N_i = E[|A_i|] = n \prod_{j=0}^{i-1}q_j = n^{1-\frac{\nu}{3}\sum_{j=0}^{i-1}(4/3)^j}=n^{1-((4/3)^i-1)\nu}~.\]

For every $i \in [k-1]$, define the pivot $p_i(v)$ to be the closest vertex in $A_i$ to $v$, breaking ties by lexicographic order. For every $u \in A_i \backslash A_{i+1}$ define the {\em half bunch}
\begin{equation}\label{eq:hb}
B_{1/2}(u) = \{v \in A_i ~:~ d_G(u, v) < d_G(u, A_{i+1})/2\}~.
\end{equation}
Let $H = \{P_{uv} ~:~ u \in V, v \in B_{1/2}(u) \cup \{p_j(u) | i < j < k \}\}$, where $P_{uv}$ is the shortest path between $u,v$ in $G$ (if there is more than one, break ties consistently, by vertex id, say).

\begin{theorem}\label{1epsspan_theorem}
	For any weighted graph $G = (V, E)$ on $n$ vertices, and any integer $k > 2$, there exists $H$ of size at most $O(kn + n^{1+1/((4/3)^k - 1)})$, which is a $(1+\varepsilon, \beta\cdot W)$-spanner for any $0<\varepsilon <1$, with $\beta = O(k/\varepsilon)^{k-1}$.
\end{theorem}

\begin{theorem}
	\label{3span_theorem}
	For any weighted graph $G = (V, E)$ on $n$ vertices, and any integer $k > 2$, there exists $H$ of size at most $O(kn + n^{1+1/((4/3)^k - 1)})$, which is a $(3+\varepsilon, \beta\cdot W)$-spanner for any $\varepsilon>0$ with $\beta =O(1+1/\varepsilon) ^ {k-1}$.
\end{theorem}

Full proofs for both theorems are given in appendix \ref{appendix:spanner_proofs}.

\section{Efficient Implementation}\label{sec:apps}

Since we use very similar constructions to the ones in \cite{EN19hop}, we can use their efficient implementations (connecting to all pivots, which is the difference between constructions, can be done efficiently in their framework as well). We consider here the standard model of computation, and the PRAM (CRCW) model.
Given a parameter $1/k<\rho<1/2$, we will want poly-logarithmic parallel time and $\tilde{O}(|E|\cdot n^\rho)$ work / centralized time. This is achieved by adding additional $\lceil 1/\rho\rceil$ sets $A_i$, that are sampled with uniform probability $n^{-\rho}$, which in turn increases the exponent of $\beta$ by an additive $1/\rho+1$. (In the case of multiplicative stretch $1+\epsilon$, it also increases the base of the exponent in $\beta$.)

We summarize the efficient implementation results for hopsets and emulators in the following theorem.

\begin{theorem}
	\label{theorem-eff}
	For any weighted graph $G = (V, E)$ on $n$ vertices, parameters $k > 2$ and $1/k<\rho<1/2$, there is a randomized algorithm running in time $\tilde{O}(|E|\cdot n^\rho)$, that w.h.p. computes $H$ of size at most $O(kn + n^{1+1/(2^k - 1)})$, such that for any $0<\varepsilon<1$ this $H$ is:
	\begin{enumerate}
		\item A $(1+\varepsilon, \beta\cdot W)$-emulator with  $\beta=O\left(\frac{k+1/\rho}{\varepsilon}\right) ^ {k+1/\rho}$.
		\item A $(3+\varepsilon, \beta\cdot W)$-emulator with  $\beta=O(1/\varepsilon) ^ {k+1/\rho}$.
		\item A $(3+\varepsilon, \beta)$-hopset with $\beta = O(1/\varepsilon) ^ {k+1/\rho}$.

	\end{enumerate}
	
	Given $\varepsilon$ in advance, the algorithm can also be implemented in the PRAM (CRCW) model, in parallel time $\left(\frac{\log n}{\varepsilon}\right)^{O(k+1/\rho)}$ and work $\tilde{O}(|E|\cdot n^\rho)$, while increasing the size of $H$ by a factor of $O(\log^*n)$.
\end{theorem}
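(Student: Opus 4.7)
The plan is to piggyback on the efficient implementations already developed in \cite{EN19hop} for the unweighted case, since our construction (Section \ref{sec:const}) differs only in (i) adding edges from each vertex to all its higher-level pivots, and (ii) being interpreted on weighted graphs. Adding edges to pivots costs nothing extra: identifying $p_i(v)$ for every $v$ and every level $i$ is already a byproduct of the bunch computation, because the truncation radius of the Bellman-Ford-like exploration from $v$ at level $i$ is exactly $d_G(v,A_{i+1})$. Handling edge weights also requires no change in principle, since the \cite{EN19hop} procedure already performs weighted bounded-distance explorations; only the bookkeeping of stretch in the \emph{analysis} is new (and has been done in Sections \ref{sec:emulator}, Appendix \ref{sec:hopset}, Appendix \ref{sec:3+eps-emul}).

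The second ingredient is controlling the work by $\tilde{O}(|E|\cdot n^\rho)$. The bare construction of Section \ref{sec:const} would have work roughly $\tilde{O}(|E|\cdot n^{1/(2^k-1)})$, which can exceed $n^\rho$ when $k$ is small relative to $1/\rho$. To cap the work, I would prepend $\lceil 1/\rho\rceil$ additional levels to the hierarchy, each sampled from the previous level with uniform probability $n^{-\rho}$ (as the paper itself foreshadows just before the theorem). Each such level has expected size a factor $n^\rho$ smaller than the previous one, so the expected size of every vertex's exploration ball stays $\tilde{O}(n^\rho)$, yielding total centralized time $\tilde{O}(|E|\cdot n^\rho)$. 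The total number of levels in the augmented hierarchy becomes $k+\lceil 1/\rho\rceil$, which is where the exponent $k+1/\rho$ in $\beta$ comes from. In the PRAM (CRCW) model, I would use the parallel bounded-hop Bellman-Ford of \cite{EN19hop}, whose depth per level is polylogarithmic in $n/\varepsilon$; with $k+O(1/\rho)$ levels and hopbound roughly $\beta$ per level, the total depth is $(\log n/\varepsilon)^{O(k+1/\rho)}$. The additional $O(\log^* n)$ factor in $|H|$ arises from the standard parallel sampling trick used in \cite{EN19hop} to boost the success probability of locating pivots within a bounded number of BFS layers.

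The third step is to re-run the stretch proofs of Theorem \ref{1epsemu_theorem} (and its $(3+\varepsilon)$ and hopset analogues) with the enlarged hierarchy. The induction of Lemma \ref{1epsemu_constant_lemma} and the concluding argument via Lemma \ref{1epsemu_beta_lemma} use only the chain structure $A_0\supseteq A_1\supseteq\dots$ and the existence of pivots/bunches at each level, not the precise sampling probabilities. Consequently, setting $\Delta=3+4(k+\lceil 1/\rho\rceil-1)/\varepsilon$ in the $(1+\varepsilon)$ analysis and tracking the $(3\Delta)^{k+1/\rho-1}$ blowup yields $\beta=O((k+1/\rho)/\varepsilon)^{k+1/\rho}$; the analogous computation in Appendix \ref{sec:3+eps-emul} gives $\beta=O(1/\varepsilon)^{k+1/\rho}$ for the $(3+\varepsilon)$ cases, for both emulators and hopsets. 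The hopset bound follows from the same scheme, since the construction of Section \ref{sec:const} is simultaneously an emulator and, when used together with at most $\beta$ graph hops per segment, a hopset.

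The main obstacle will be checking that the extra $\lceil 1/\rho\rceil$ uniform levels interact correctly with the levels of the original hierarchy — in particular that the bunch/half-bunch size bound of Lemma \ref{lem:size_hopset_emu} is preserved up to the claimed $O(kn+n^{1+1/(2^k-1)})$ budget, which requires that the additional levels contribute only $O(n/\rho)$ edges from vertex-to-pivot connections and $\tilde{O}(n^{1+\rho}/n^\rho)=\tilde{O}(n)$ edges from intra-level bunches, both of which are absorbed into the $O(kn)$ term (with $k$ reinterpreted as $k+1/\rho$). Once this accounting is verified, the three items follow by combining the size bound, the running-time/work bound from the Bellman-Ford implementation, and the rerun stretch analysis.
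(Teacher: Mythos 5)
Your overall plan is the same as the paper's: delegate the implementation to the machinery of \cite{EN19hop} (observing that connecting each vertex to all its pivots is a free byproduct of the bunch computation), insert $\lceil 1/\rho\rceil$ extra levels sampled with uniform probability $n^{-\rho}$, and rerun the stretch analyses of Section \ref{sec:emulator} and Appendices \ref{sec:hopset}, \ref{sec:3+eps-emul} with $k+O(1/\rho)$ levels to get the exponent $k+1/\rho$ in $\beta$. The paper's own ``proof'' is essentially this one paragraph, so in structure you match it.

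However, your justification of the work bound has a genuine flaw. You claim the bare construction costs $\tilde{O}(|E|\cdot n^{1/(2^k-1)})$ and that trouble arises only ``when $k$ is small relative to $1/\rho$.'' In fact $n^{1/(2^k-1)}=1/q_0$ is the \emph{bottom}-level bunch size; the exploration cost is dominated by the \emph{top} levels, where $1/q_i = n^{2^i\nu}2^{2^i+1}$ grows to roughly $n^{1/2}$ at $i=k-1$ (indeed $N_{k-1}\approx n^{1/2}$ and level-$(k-1)$ vertices connect to all of $A_{k-1}$). Since the theorem allows any $\rho<1/2$, this always exceeds $n^\rho$, for every $k\ge 2$ --- under your reading and the hypothesis $\rho>1/k$ no modification would be needed at all, which is a sign the accounting is off. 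Consequently, merely adding (``prepending'' or appending) $\lceil 1/\rho\rceil$ uniform levels while keeping all $k$ original sampling probabilities does \emph{not} yield work $\tilde{O}(|E|\cdot n^\rho)$: the explorations at the retained top levels already cost $|E|\cdot n^{\Theta(1)}$ with exponent approaching $1/2$. The correct modification, which the paper makes explicit only in the spanner variant (``we start sampling with the uniform probability $n^{-\rho}$ only when $N_i\le n^{1-3\rho}$ ... and not when $N_i\le n^{1-\rho}$ like before''), is to \emph{truncate} the doubly-exponential decay of the $q_i$ as soon as $1/q_i$ would exceed $n^\rho$ (equivalently, once $N_i\le n^{1-\rho}$), and from that point on use the uniform probability $n^{-\rho}$ for the remaining $\approx 1/\rho$ levels until the hierarchy empties. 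With that truncation every level's expected bunch size is at most $n^\rho$, the total level count is at most $k+\lceil 1/\rho\rceil$, and the rest of your argument (size accounting, PRAM depth, $\log^* n$ factor, and the reruns of the stretch lemmas) goes through as you describe.
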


For spanners, recall that in Section \ref{sec:spanner} we have a somewhat different construction, and in the analysis we enforce a stricter requirement on the sampling probabilities $q_i$. To handle this, we start sampling with the uniform probability $n^{-\rho}$ only when $N_i\le n^{1-3\rho}$ (and not when $N_i\le n^{1-\rho}$ like before). Now the bound of Claim \ref{claim:span} still holds, as $N_i/q_i^3\le n$ even for these latter sets. The "price" we pay for waiting until $N_i\le n^{1-3\rho}$ is that the work will now be $|E|\cdot n^{3\rho}$. Rescaling $\rho$ by 3, we get the following.


\begin{theorem}
	\label{theorem-eff-span}
	For any weighted graph $G = (V, E)$ on $n$ vertices, parameters $k > 6$ and $1/k<\rho<1/6$, there is a randomized algorithm running in time $\tilde{O}(|E|\cdot n^\rho)$, that w.h.p. computes $H$ of size at most $O(kn + n^{1+1/(2^k - 1)})$, such that for any $0<\varepsilon<1$ this $H$ is:
	\begin{enumerate}
		\item A $(1+\varepsilon, \beta\cdot W)$-spanner with  $\beta=O\left(\frac{k+1/\rho}{\varepsilon}\right)^ {k+3/\rho}$.
		\item A $(3+\varepsilon, \beta\cdot W)$-spanner with  $\beta=O(1/\varepsilon) ^ {k+3/\rho}$.
	\end{enumerate}
	Given $\varepsilon$ in advance, the algorithm can also be implemented in the PRAM (CRCW) model, in parallel time $\left(\frac{\log n}{\varepsilon}\right)^{O(k+1/\rho)}$ and work $\tilde{O}(|E|\cdot n^\rho)$, while increasing the size of $H$ by a factor of $O(\log^*n)$.
\end{theorem}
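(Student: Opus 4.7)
The plan is to combine the spanner construction of Section \ref{sec:spanner} with the efficient parallel hierarchy-building machinery of \cite{EN19hop}, following the same template used for Theorem \ref{theorem-eff}. As a first step I would extend the sampling hierarchy: as long as the expected level size satisfies $N_i > n^{1-3\rho}$, keep the original $q_i = n^{-4^i\nu/3^{i+1}}$ of Section \ref{sec:spanner}; once $N_i \le n^{1-3\rho}$, append $\Theta(1/\rho)$ additional sampling steps, each obtained by independently including every element of the previous level with uniform probability $n^{-\rho}$. The switching threshold $n^{1-3\rho}$ (rather than $n^{1-\rho}$ as in the emulator/hopset case) is exactly what preserves the spanner size invariant: in the uniform phase one has $N_i/q_i^3 \le n^{1-3\rho}\cdot n^{3\rho} = n$, which is the inequality invoked in the counting argument of Claim \ref{claim:span}.

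Next I would verify size and stretch. The size analysis of Theorems \ref{1epsspan_theorem} and \ref{3span_theorem} goes through on the extended hierarchy, with the added levels contributing at most $O(n/\rho)$ extra pivot edges and $\tilde{O}(n^{1+\rho})$ half-bunch edges, all absorbed into $O(kn + n^{1+1/((4/3)^k-1)})$. For stretch, the same inductive argument applies but with effective depth $k + \Theta(1/\rho)$; this propagates directly through the $\beta$ bounds and yields exponents of order $k + 1/\rho$ at this stage. To compute the hierarchy efficiently I would invoke the parallel primitives of \cite{EN19hop} (truncated Dijkstra from each pivot, hopset-accelerated relaxation, and broadcasts organized by level) to obtain the pivots $p_i(v)$ and the half-bunches $B_{1/2}(u)$ in depth $(\log n/\varepsilon)^{O(k+1/\rho)}$, with total work $\tilde{O}(|E|\cdot n^{3\rho})$---the factor $n^{3\rho}$ being the cost of Dijkstra explorations in the uniform phase, whose effective exploration radius is controlled by the smallest sampling probability $n^{-\rho}$.

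To match the statement I would then reparameterize, replacing the free parameter by $\rho/3$ in the construction: this converts the work bound to $\tilde{O}(|E|\cdot n^\rho)$ and turns each $1/\rho$ appearing in the exponent of $\beta$ into $3/\rho$, giving the advertised $\beta = O((k+1/\rho)/\varepsilon)^{k+3/\rho}$ and $\beta = O(1/\varepsilon)^{k+3/\rho}$ respectively. The hypotheses $k>6$ and $\rho<1/6$ are exactly what guarantees, after this rescaling, that the internal uniform probability $n^{-\rho/3}$ is strictly less than $1$ and that the uniform phase begins before the geometric-decay phase terminates.

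The main obstacle is verifying that the intersection-counting estimate of Claim \ref{claim:span} really depends on $N_i$ and $q_i$ only through the local ratio $N_i/q_i^3$ at each level, and not on any global geometric relation among the $q_i$'s coming from Section \ref{sec:spanner}. Once that locality is confirmed, the remainder is bookkeeping: bounding expected half-bunch sizes in the uniform phase, re-verifying that the path unions $\bigcup_{v\in B_{1/2}(u)} P_{uv}$ still fit within the stated size bound, and checking that the parallel implementation of \cite{EN19hop} tolerates the modified sampling schedule without further change.
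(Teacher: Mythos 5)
Your proposal matches the paper's argument essentially verbatim: delay the uniform sampling phase (probability $n^{-\rho}$) until $N_i \le n^{1-3\rho}$ so that $N_i/q_i^3 \le n$ and Claim~\ref{claim:span} still controls the path intersections, accept work $\tilde{O}(|E|\cdot n^{3\rho})$, and then rescale $\rho$ by a factor of $3$, which converts the work to $\tilde{O}(|E|\cdot n^{\rho})$ and each $1/\rho$ in the exponent of $\beta$ into $3/\rho$. One cosmetic remark: the half-bunch edges contributed by the uniform levels are bounded by $O(n)$ per level directly via the intersection count $N_i/q_i^3\le n$ (not merely $\tilde{O}(n^{1+\rho})$, which for the allowed range of $\rho$ need not be dominated by $n^{1+1/((4/3)^k-1)}$); this is exactly what makes them absorbable into the stated size bound.
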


\section{Almost Shortest Paths in Weighted Graphs}\label{sec:ASP}

Given a weighted graph $G=(V,E, w)$ with $n$ vertices and a set $S\subseteq V$ of $s$ sources, fix parameters $k>6$, $0<\varepsilon<1$ and $0<\rho<1/6$. Here we show that our hopsets, emulators and spanners can be used for a $(1+\varepsilon,\beta\cdot W)$-approximate shortest paths for pairs in $S\times V$, in various settings.

For the standard centralized setting, we first compute a $(1+\varepsilon, \beta\cdot W)$-spanner $H$ of size $O(kn + n^{1+1/(2^k - 1)})$ with $\beta=O(\frac{k+3/\rho}{\varepsilon}) ^ {k+3/\rho}$, in time $\tilde{O}(|E|\cdot n^\rho)$ as in Theorem \ref{theorem-eff-span}. Next, for every $u\in S$ run Dijkstra's shortest path algorithm in $H$, which takes time $O(s\cdot(|E(H)|+n\log n)) = \tilde{O}(s\cdot n^{1+1/(2^k-1)})$. 

The total running time for computing $(1+\varepsilon, \beta\cdot W)$-approximate shortest path for all $S\times V$, is $\tilde{O}(|E|\cdot n^\rho+s\cdot n^{1+1/(2^k-1)})$. One may choose $\rho=1/k$ and obtain the following.

\begin{theorem}\label{thm:paths}
	For any weighted graph $G = (V, E)$ on $n$ vertices, a set $S\subseteq V$ of $s$ sources, and parameters $k > 6$ and $0<\varepsilon<1$, there is a  randomized  algorithm running in time $\tilde{O}(|E|\cdot n^{1/k}+s\cdot n^{1+1/(2^k-1)})$, that computes $(1+\varepsilon,\beta\cdot W)$-approximate shortest paths for all pairs in $S\times V$, where
	$	\beta=\left(\frac{k}{\varepsilon}\right) ^ {O(k)}$.
\end{theorem}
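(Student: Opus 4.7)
The plan is to follow the outline sketched in the paragraph preceding the theorem and just verify that the pieces compose correctly. First I would invoke Theorem~\ref{theorem-eff-span} with the given $k$ and the choice $\rho = 1/k$ (which fits the admissible range once $k>6$) to obtain in time $\tilde{O}(|E|\cdot n^{1/k})$ a $(1+\varepsilon,\beta\cdot W)$-spanner $H$ of size $O(kn + n^{1+1/(2^k-1)})$, where the spanner's additive coefficient is
\[
\beta \;=\; O\!\left(\tfrac{k+3/\rho}{\varepsilon}\right)^{k+3/\rho} \;=\; O\!\left(\tfrac{4k}{\varepsilon}\right)^{4k} \;=\; \left(\tfrac{k}{\varepsilon}\right)^{O(k)},
\]
which matches the stated bound.

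Next, since $H$ is a spanner (a subgraph of $G$ equipped with the original edge weights), any simple path in $H$ is also a valid path in $G$ of the same length. Thus for every source $u\in S$ I would run Dijkstra's algorithm in $H$, producing a shortest-path tree rooted at $u$ within $H$. Using the Fibonacci heap implementation, each such execution costs $O(|E(H)| + n\log n) = \tilde{O}(n^{1+1/(2^k-1)})$, and the total over the $s$ sources is $\tilde{O}(s\cdot n^{1+1/(2^k-1)})$. Combined with the spanner-construction time, the aggregate running time is $\tilde{O}(|E|\cdot n^{1/k} + s\cdot n^{1+1/(2^k-1)})$.

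To verify the approximation guarantee, for any pair $(u,v)\in S\times V$ the spanner property of $H$ yields $d_H(u,v)\le (1+\varepsilon) d_G(u,v)+\beta\cdot W(u,v)$, and the path returned by Dijkstra on $H$ realizes this distance, hence realizes a genuine $u$-$v$ path in $G$ of the same length, so the output satisfies (\ref{eq:near_add}). The lower bound $d_G(u,v)\le d_H(u,v)$ is automatic because $H\subseteq G$.

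There is no real obstacle here: the statement is essentially a composition of Theorem~\ref{theorem-eff-span} with single-source Dijkstra, and the only item that requires a sanity check is that the substitution $\rho=1/k$ both lies in the allowed window and collapses $\beta$ to the advertised $(k/\varepsilon)^{O(k)}$ form, as shown above. The mildly subtle point worth emphasizing in the write-up is that we use the spanner variant (as opposed to the emulator or hopset variant from Theorem~\ref{theorem-eff}) precisely because it guarantees that $H$'s edges are actual $G$-edges, so running Dijkstra on $H$ returns explicit approximate shortest \emph{paths} in $G$, not merely distance estimates.
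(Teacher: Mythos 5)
Your proposal is correct and follows essentially the same route as the paper: apply Theorem~\ref{theorem-eff-span} with $\rho=1/k$ to build the $(1+\varepsilon,\beta\cdot W)$-spanner, then run Dijkstra in the spanner from each of the $s$ sources, with exactly the paper's time and $\beta$ accounting (the paper likewise sets $\rho=1/k$ despite the nominally strict inequality $1/k<\rho$ in Theorem~\ref{theorem-eff-span}, so that boundary choice is not a point of divergence). Your emphasis on using the spanner rather than the emulator/hopset variant to report genuine paths in $G$ matches the paper's reasoning as well.
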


We remark that there is a more general tradeoff by choosing $\rho$ as a free parameter. In addition, if one desires improved additive stretch, simply use the $(3+\varepsilon, \beta\cdot W)$-spanner with  $\beta=\varepsilon ^ {-O(k)}$.


\subsection{PRAM Shortest Paths and Distance Oracles}

Given a weighted graph $G=(V,E, w)$ with $n$ vertices, fix parameters $k\ge 1$, $0<\varepsilon<1$ and $0<\rho<1/6$. The first step in both settings (computing approximate shortest paths and distance oracles) is the same. We construct a $(1+\varepsilon, \beta\cdot W)$-emulator $G'$ of size $O(kn + n^{1+1/(2^k - 1)})\cdot\log^*n$ with  $\beta=O\left(\frac{k+1/\rho}{\varepsilon}\right) ^ {k+1/\rho}$, in parallel time $\left(\frac{\log n}{\varepsilon}\right)^{O(k+1/\rho)}$ and work $\tilde{O}(|E|\cdot n^\rho)$ as in \theoremref{theorem-eff}. Next, compute a $(1+\varepsilon, \beta)$-hopset $H$ size $O(n^{1+1/(2^k - 1)})\cdot\log^*n$ for $G'$ with the same $\beta = O\left(\frac{k+1/\rho}{\varepsilon}\right) ^ {k+1/\rho}$ within the same parallel time and work \cite{EN19hop}.\footnote{We remark that even though the emulator and hopset have exactly the same construction, we run the hopset algorithm on $G'$ and not on $G$, thus we get a different set of edges.} We store both $G'$ and $H$.

\paragraph{Approximate Shortest Paths.} 
For the sake of simplicity we will choose $\rho=1/k$.
Given a set $S$ of $s$ sources, for each source $u\in S$ run $\beta$ rounds of the Bellman-Ford algorithm in the graph $G'\cup H$ starting at $u$. In each round of Bellman-Ford, every vertex sends its neighbors the current distance estimate to $u$ that it has, and they update their distance estimate if needed. Since $G'\cup H$ is a sparse graph with $\tilde{O}(n^{1+1/(2^k - 1)})$ edges, with  $\tilde{O}(n^{1+1/(2^k - 1)})$ processors one can implement each iteration in PRAM (CRCW) in $O(2^k)$ time (see \cite{EN19hop} for more details). As we have only $\beta$ rounds, the total parallel time for all the Bellman-Ford rounds from all vertices in $S$ is $(k/\varepsilon)^{O(k)}$, and the total work is $\tilde{O}(s\cdot n^{1+1/(2^k - 1)})$.

As the error of the emulator is $(1+\varepsilon,\beta\cdot W)$, and the hopset has only multiplicative $1+\varepsilon$ stretch, the total error is only $(1+O(\varepsilon),O(\beta\cdot W)$. We thus have the following result.

\begin{theorem}\label{thm:pram-paths}
	For any weighted graph $G = (V, E)$ on $n$ vertices, a set $S\subseteq V$ of $s$ sources, and parameters $k > 2$ and $0<\varepsilon<1$, there is a PRAM randomized  algorithm running in $\left(\frac{\log n}{\varepsilon}\right)^{O(k)}$ parallel time and using $\tilde{O}(|E|\cdot n^{1/k}+s\cdot n^{1+1/(2^k - 1)})$ work, that computes $(1+\varepsilon,\beta\cdot W)$-approximate shortest paths for all pairs in $S\times V$, where
	$\beta=\left(\frac{k}{\varepsilon}\right) ^ {O(k)}$.
\end{theorem}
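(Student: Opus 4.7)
The proof is a bookkeeping assembly of the three components described just above the theorem, so I would present it as such. The plan is: first, apply \theoremref{theorem-eff}(1) with $\rho = 1/k$ to construct a $(1+\varepsilon',\beta_1\cdot W)$-emulator $G'$ of size $\tilde{O}(n^{1+1/(2^k-1)})$ with $\beta_1 = (k/\varepsilon')^{O(k)}$, in PRAM parallel time $(\log n/\varepsilon')^{O(k)}$ and work $\tilde{O}(|E|\cdot n^{1/k})$. Second, invoke the hopset construction of \cite{EN19hop} on the emulator $G'$ (crucially, on $G'$ and not on $G$) to obtain a $(1+\varepsilon',\beta_2)$-hopset $H$ for $G'$ with $\beta_2=(k/\varepsilon')^{O(k)}$, within the same parallel time and work bounds; here the work is dominated by the emulator step since $|E(G')|\cdot n^{1/k} = \tilde{O}(n^{1+1/k+1/(2^k-1)})$ is absorbed into $\tilde{O}(|E|\cdot n^{1/k})$. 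Store $G'\cup H$, whose edge count remains $\tilde{O}(n^{1+1/(2^k-1)})$.

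For each source $u\in S$, run $\beta := \max(\beta_1,\beta_2)$ rounds of Bellman-Ford in $G'\cup H$ in parallel across all sources. The estimate $\hat d(u,v)$ equals the weight of a shortest $u$-$v$ walk in $G'\cup H$ of at most $\beta$ edges. By the hopset guarantee this is at most $(1+\varepsilon')\,d_{G'}(u,v)$, and by the emulator guarantee $d_{G'}(u,v)\le (1+\varepsilon')d_G(u,v) + \beta_1\cdot W(u,v)$. Composing the two bounds and rescaling $\varepsilon' = \Theta(\varepsilon)$ gives
\[
d_G(u,v)\le\hat d(u,v)\le (1+\varepsilon)d_G(u,v) + \beta\cdot W(u,v),\qquad \beta=(k/\varepsilon)^{O(k)}.
\]
Because $G'\cup H$ is sparse, each Bellman-Ford round admits a CRCW PRAM implementation in $O(2^k)$ parallel time using $\tilde{O}(n^{1+1/(2^k-1)})$ processors per source, as in \cite{EN19hop}. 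Running all $s$ sources concurrently, the $\beta$ rounds cost parallel time $\beta\cdot O(2^k) = (k/\varepsilon)^{O(k)}$ and total work $s\cdot\beta\cdot\tilde{O}(n^{1+1/(2^k-1)}) = \tilde{O}(s\cdot n^{1+1/(2^k-1)})$. Adding the preprocessing bounds yields parallel time $(\log n/\varepsilon)^{O(k)}$ and work $\tilde{O}(|E|\cdot n^{1/k} + s\cdot n^{1+1/(2^k-1)})$, as claimed.

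The only conceptually subtle point, and the step I would watch most carefully, is the composition of stretches. Layering the hopset on top of $G'$ (rather than on $G$) is what keeps the additive near-additive term from being multiplied by the hopbound: because the hopset's $(1+\varepsilon')$ factor stretches only distances that already exist in $G'$, the emulator's additive error $\beta_1\cdot W$ is paid exactly once. Everything else is absorption of polylogarithmic factors and a one-time rescaling of $\varepsilon$, and the w.h.p. correctness is inherited from the two randomized constructions.
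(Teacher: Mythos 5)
Your proposal is correct and follows essentially the same route as the paper: build the $(1+\varepsilon,\beta\cdot W)$-emulator $G'$ via Theorem~\ref{theorem-eff} with $\rho=1/k$, build a $(1+\varepsilon,\beta)$-hopset for $G'$ (not for $G$) via \cite{EN19hop}, and run $\beta$ rounds of Bellman--Ford in $G'\cup H$ from each source, composing the multiplicative hopset stretch with the near-additive emulator stretch and rescaling $\varepsilon$. The subtle point you flag — that layering the hopset on $G'$ ensures the additive term is paid only once — is exactly the observation the paper makes in its footnote.
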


As above, we can get a more general tradeoff with the parameter $\rho$, and improve the additive stretch by using the hopsets and emulators from Sections \ref{sec:hopset},\ref{sec:3+eps-emul}, albeit the multiplicative stretch will increase.

\paragraph{Distance Oracles.}
Recall that we store the emulator $G'$ and a hopset $H$ for $G'$.
Whenever a query $u\in V$ arrives, we run $\beta$ rounds of Bellman-Ford algorithm in the graph $G'\cup H$. As noted above, each round of Bellman-Ford can be implemented in PRAM (CRCW) in $O(2^k)$ time using $\tilde{O}(n^{1+1/(2^k - 1)})$ processors. So the total parallel time for the query is $O\left(\frac{k+1/\rho}{\varepsilon}\right) ^ {k+1/\rho}$.  Rescaling $\varepsilon$, we get a $\left(1+\varepsilon, O\left(\frac{k+1/\rho}{\varepsilon}\right) ^ {k+1/\rho}\cdot W\right)$-approximation.
We conclude that the properties of the distance oracle we devise are:
\begin{itemize}
	\item Has size $O(kn + n^{1+1/(2^k - 1)})\cdot\log^*n$.
	\item Given query $u\in V$, can report $\left(1+\varepsilon, \beta\cdot W\right)$-approximation to {\em all} distances in $\{u\}\times V$, with $\beta=O\left(\frac{k+1/\rho}{\varepsilon}\right) ^ {k+1/\rho}$.
	\item Has query time $O\left(\frac{k+1/\rho}{\varepsilon}\right) ^ {k+1/\rho}$ and $\tilde{O}(n^{1+1/(2^k - 1)})$ work.
	\item The preprocessing time is $\left(\frac{\log n}{\varepsilon}\right)^{O(k+1/\rho)}$ and work $\tilde{O}(|E|\cdot n^\rho)$.
\end{itemize}


\bibliographystyle{alpha}
\bibliography{hopset}

\appendix

\section{A $(3+\epsilon,\beta)$-Hopset}
\label{sec:hopset}
Here we show that the set $H$ of Section \ref{sec:const} serves as a  $(3+\epsilon,\beta)$-hopset, for all $0<\epsilon<12$ simultaneously, with $\beta=2^{O(k\cdot\log(1/\epsilon))}$.

Denote by $d_G^{(t)}(u, v)$ the length of the shortest path between $u, v$ in $G$ that contains at most $t$ edges.
The following lemma bounds the number of hops and the stretch of the constructed hopset:

\begin{lemma}
	\label{3eps_hopset_lemma}
	Fix any $0 < \delta \leq 1/4$ and any $x,y \in V$. Then for every $0 \leq i \leq k-1$, at least one of the following holds:
	
	\begin{enumerate}
		\item $d_{G\cup H}^{(2 \cdot (1/\delta)^i - 1)}(x, y) \leq (3 + \frac{12 \delta}{1 - 3 \delta}) d_G(x, y)$
		\item $d_{G\cup H}^{(1)}(x, p_{i+1}(x)) \leq \frac{3}{1-3 \delta} d_G(x, y)$
	\end{enumerate}
\end{lemma}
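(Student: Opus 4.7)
The plan is to proceed by induction on $i$, mirroring the structure of \lemmaref{1epsemu_constant_lemma} but now taking advantage of the fact that only $H$-edges contribute to the hop count while original graph edges are free. This relaxation is what converts the emulator's $1+\eps$ multiplicative stretch into the hopset's $3+\eps$: the ``skip'' across a level-$(i+1)$ pivot-to-pivot $H$-edge has weight $d_G(p,q)\le d_G(p,u_l)+d_G(u_l,u_r)+d_G(u_r,q)$ by triangle inequality, so the detour $u_l\to p\to q\to u_r$ can cost up to roughly $3\cdot d_G(u_l,u_r)$ even when the pivots are well-chosen.

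The base case $i=0$ is immediate from the bunch construction: if $y\in B(x)$ the direct edge $(x,y)\in H$ witnesses item~1 in a single hop; otherwise, either $x\in A_1$ and $p_1(x)=x$ trivially gives item~2, or $x\in A_0\setminus A_1$, in which case the bunch definition places $p_1(x)\in B(x)$ with $d_G(x,p_1(x))=d_G(x,A_1)\le d_G(x,y)$ (the latter inequality because $y\notin B(x)$ forces $d_G(x,y)\ge d_G(x,A_1)$), again yielding item~2 in one hop.

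For the inductive step, I would partition the shortest $x$--$y$ path into at most $\lceil 1/\delta\rceil$ segments $L_j=[u_j,u_{j+1}]$ of length at most $\delta\cdot d_G(x,y)$ (using the same wrap-around adjustment for the last segment as in \lemmaref{1epsemu_constant_lemma}), and apply the IH at level $i$ to each segment. If every segment satisfies item~1 of the IH, concatenating the paths yields item~1 at level $i+1$: the hop count is at most $\lceil 1/\delta\rceil\cdot(2(1/\delta)^i-1)\le 2(1/\delta)^{i+1}-1$, where the final inequality requires $\delta\le 1/4$ (this is exactly where the hypothesis on $\delta$ enters). Otherwise, let $L_l$ and $L_{r-1}$ be the leftmost and rightmost segments witnessing item~2 of the IH (applied to the reversed pair for the right endpoint, in order to obtain $d_{G\cup H}^{(1)}(u_r,p_{i+1}(u_r))\le\frac{3}{1-3\delta}d_G(u_{r-1},u_r)$), and write $p=p_{i+1}(u_l)$, $q=p_{i+1}(u_r)$. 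Two subcases parallel \lemmaref{1epsemu_constant_lemma}: (a) If $q\in B(p)$, then the single $H$-edge $(p,q)$ exists, and the 3-hop route $u_l\to p\to q\to u_r$ has total length at most $d_G(u_l,u_r)+2(d_G(u_l,p)+d_G(u_r,q))\le d_G(u_l,u_r)+\frac{12\delta}{1-3\delta}d_G(x,y)$ by item~2 of the IH and $d_G(u_l,u_{l+1}),d_G(u_{r-1},u_r)\le\delta\cdot d_G(x,y)$; prepending and appending the low-stretch paths on the outer segments (each satisfying item~1 of the IH) and telescoping the algebra using $d_G(x,u_l)+d_G(u_l,u_r)+d_G(u_r,y)=d_G(x,y)$ yields item~1 at level $i+1$. (b) If $q\notin B(p)$, then the bunch definition forces $d_G(p,p_{i+2}(p))\le d_G(p,q)$; by the pivots-in-all-levels property of the construction, $p_{i+2}(p)$ is reachable from $x$ in a single $H$-hop via $p$, and bounding $d_G(p,q)$ via triangle by $d_G(p,u_l)+d_G(u_l,u_r)+d_G(u_r,q)\le\frac{6\delta}{1-3\delta}d_G(x,y)+d_G(x,y)$ yields item~2 at level $i+1$.

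The main obstacle is the careful stretch bookkeeping in subcase (a): one must verify that the stretch $(3+\frac{12\delta}{1-3\delta})$ propagates through the induction, relying on the key observation that the middle piece $d^{(3)}(u_l,u_r)$ has coefficient $1$ (not $3+\frac{12\delta}{1-3\delta}$) on $d_G(u_l,u_r)$, so the ``missing'' stretch on the middle is compensated by the identity $A+B=d_G(x,y)$ --- this is the analog of the cancellation $-\tfrac{4i}{\Delta-3}B\le 0$ at the end of \lemmaref{1epsemu_constant_lemma}. The hop-count condition $\delta\le 1/4$ is tight in the sense that any larger $\delta$ breaks the inequality $\lceil 1/\delta\rceil(2(1/\delta)^i-1)+3\le 2(1/\delta)^{i+1}-1$ needed to absorb the three extra hops of the pivot detour within the new hop budget.
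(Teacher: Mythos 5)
Your base case and the all-segments-succeed case are fine, but the failure case contains a genuine gap that the paper's proof avoids by taking a structurally different route. In your subcase (a) you route $u_l\to p_{i+1}(u_l)\to p_{i+1}(u_r)\to u_r$ and then prepend/append item-1 paths on the outer segments. Writing $A=d_G(x,u_l)$, $M=d_G(u_l,u_r)$, $B=d_G(u_r,y)$ and $c=\frac{12\delta}{1-3\delta}$, your total is at least $(3+c)(A+B)+M+c\,d_G(x,y)$, and closing the induction would require $(3+c)(A+B)+M+c(A+M+B)\le(3+c)(A+M+B)$, i.e.\ $c(D-M)\le 2M$ where $D=d_G(x,y)$. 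This fails whenever $M$ is small compared to $D$ (e.g.\ when a single segment fails, $M\approx\delta D$). The cancellation you invoke from Lemma~\ref{1epsemu_constant_lemma} works there only because the target coefficient $1+\frac{4i}{\Delta-3}$ \emph{grows with $i$}, so each level of the recursion can absorb one more additive $\frac{4}{\Delta-3}D$; here the coefficient $3+\frac{12\delta}{1-3\delta}$ is fixed in $i$, so an emulator-style ``outer segments plus middle detour'' accounting cannot close. The paper's proof instead abandons the segment decomposition entirely in the failure case: it takes only the \emph{leftmost} failed segment $u_l$, bounds $d^{(1)}_{G\cup H}(x,p_{i+1}(x))\le d_G(x,u_l)+\frac{3\delta}{1-3\delta}D$ and $d^{(1)}_{G\cup H}(y,p_{i+1}(y))\le d_G(y,u_l)+\frac{3\delta}{1-3\delta}D$ using the direct vertex-to-pivot edges of the construction, and then exhibits the \emph{three-hop} path $x\to p_{i+1}(x)\to p_{i+1}(y)\to y$ of length at most $2d(x,u_l)+2d(y,u_l)+D+\frac{12\delta}{1-3\delta}D=(3+\frac{12\delta}{1-3\delta})D$. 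The factor $3$ comes precisely from $2d(x,u_l)+2d(u_l,y)=2D$ plus the $D$ in the middle; no recursion on the outer segments is used, which is why the constant stays fixed and why only $3$ hops are spent.

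Two secondary problems. First, your partition into contiguous segments $[u_j,u_{j+1}]$ of length at most $\delta\,d_G(x,y)$ with the emulator's ``join the last two segments'' trick is not available here: the hopset lemma has no hypothesis of the form $d_G(x,y)\ge(3\Delta)^iW$, so a single heavy edge can make such a contiguous partition impossible (or force more than $1/\delta$ segments). The paper instead separates consecutive segments by \emph{single graph edges} $(v_j,u_{j+1})$, which cost one hop each and whose weight is charged directly; this is also what yields the clean bound $J\le 1/\delta$ via $d_G(u_j,u_{j+1})>\delta\,d_G(x,y)$. Second, in your subcase (b), item~2 requires a \emph{one-hop} bound on $d_{G\cup H}(x,p_{i+2}(x))$; ``reachable from $x$ in a single $H$-hop via $p$'' describes a two-hop path to the wrong vertex. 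The correct step is to observe that $p_{i+2}(x)$ is the closest level-$(i+2)$ vertex to $x$, hence $d_G(x,p_{i+2}(x))\le d_G(x,p_{i+1}(x))+d_G(p_{i+1}(x),p_{i+2}(p_{i+1}(x)))$, and the single edge $(x,p_{i+2}(x))$ is in $H$ by construction.
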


\begin{proof}
	The proof is by induction on $i$. For the base case $i=0$, if $y \in B(x)$, then the edge $(x, y)$ was added to the hopset and the first item holds. If not, it means that $d_G(x,y) \geq d_G(x, p_1(x))$.
	Because every vertex is connected by a direct edge to all its pivots, the second item holds (since the coefficient of the right hand side is between 3 and 12 for $0<\delta\le 1/4$).
	
	Assume the claim holds for $i$, and we will prove it holds for $i+1$.  Partition the shortest path between $x$ and $y$ into $J \leq 1/\delta$ segments $\{L_j = [u_j, v_j]\}_{j\in[J]}$ each of length at most $\delta \cdot d_G(x, y)$, and at most $(1/\delta - 1)$ edges $\{(v_j , u_{j+1})\}_{j\in[J]}$ between consecutive segments. We can use the following: setting $u_1 = x$, and for each $j \in [J]$, set $v_j$ as the vertex in the shortest path between $u_j$ and $y$ which is farthest from $u_j$, but still $d_G(u_j, v_j) \leq \delta\cdot d_G(x, y)$.
	If $v_j \neq y$, set $u_{j+1}$ as the vertex which follows $v_j$ in the shortest path between $x$ and $y$. Otherwise set $u_{j+1} = y$. This partition satisfies our requirement $J \leq 1/\delta$ because for every $j \in [J-1]$, $d_G(u_j, u_{j+1}) > \delta\cdot d_G(x, y)$ (otherwise, we could have chosen $v_j$ as $u_{j+1}$).
	
	Next, apply the induction hypothesis for all the pairs $(u_j, v_j)$ with parameter $i$.	
	If for all the pairs $(u_j, v_j)$ the first item holds, we can show that the first item holds for $(x, y)$ with parameter $i+1$.
	Consider the path from $x$ to $y$ which uses the guaranteed path in $G\cup H$ of the first item for all the pairs $(u_j, v_j)$, and the edges $(v_j, u_{j+1})$. The number of hops in this path is bounded by $(1/\delta)\cdot(2(1/\delta)^i - 1) + (1/\delta - 1) \leq 2 \cdot (1/\delta)^{i+1} - 1$. The length of the path is bounded by (using the induction hypothesis on each pair):
	
	\begin{align*}
	d_{G\cup H}^{(2(1/\delta)^{i+1}-1)}(x, y) &\leq \sum_{j\in[J]}{(d_{G\cup H}^{(2(1/\delta)^i-1)}(u_j, v_j) + d^{(1)}_G(v_j, u_{j+1}))} \\
	&\leq  (3 + \frac{12 \delta}{1 - 3 \delta})d_G(x, y)~.
	\end{align*}
	
	Otherwise, there exist at least one segment for which the first item doesn't hold. Let $l \in [J]$ be the smallest index so that only the second item holds for the pair $(u_l, v_l)$. By the induction hypothesis
	
	\begin{align*}
	d_{G\cup H}^{(1)}(u_l, p_{i+1}(u_l)) \leq  \frac{3}{1-3 \delta} d_G(u_l, v_l) \leq   \frac{3\delta}{1-3 \delta} d_G(x, y).
	\end{align*}
	
	Since we added the edges $(x,p_{i+1}(x))$, $(y,p_{i+1}(y))$ to the hopset $H$, by the triangle inequality,
	\begin{align}
	d_{G \cup H}^{(1)}&(x, p_{i+1}(x)) \leq d_G(x, u_l) + d_{G}(u_l, p_{i+1}(u_l)) \nonumber\\
	 &\leq d_G(x, u_l) +  \frac{3\delta}{1-3 \delta}d_G(x, y), \label{eq:1}\\
	d_{G \cup H}^{(1)}&(y, p_{i+1}(y)) \leq d_G(y, u_l) + d_{G}(u_l, p_{i+1}(u_l)) \nonumber \\
	&\leq d_G(y, u_l) +  \frac{3\delta}{1-3 \delta}d_G(x, y). \label{eq:2}
	\end{align}
	
	If the edge $(p_{i+1}(x), p_{i+1}(y))$ exists in $H$, its length can be bounded by
	
	\begin{align}
	d_{G \cup H}^{(1)}&(p_{i+1}(x), p_{i+1}(y)) \nonumber \\
	&\leq d_{G\cup H}^{(1)}(p_{i+1}(x), x) + d_{G}(x, y) +  d_{G\cup H}^{(1)}(y, p_{i+1}(y)).\label{eq:3}
	\end{align}
	
	Thus, the distance between $x$ and $y$ using 3 hops is
	
	\begin{align*}
	d&_{G\cup H}^{(3)}(x, y) \\
	& \leq d_{G \cup H} ^{(1)} (x, p_{i+1}(x)) + d_{G \cup H} ^ {(1)}(p_{i+1}(x), p_{i+1}(y)) + d_{G \cup H}^{(1)}(p_{i+1}(y), y) \\
	& \stackrel{(\ref{eq:3})}{\leq} 2 d_{G \cup H} ^{(1)} (x, p_{i+1}(x)) + d_{G}(x, y) + 2 d_{G \cup H}^{(1)}(p_{i+1}(y), y) \\
	& \stackrel{(\ref{eq:1}) + (\ref{eq:2})}{\leq} 2(d_G(x, u_l) + \frac{3\delta}{1-3 \delta} d_G(x, y)) + d_G(x, y) + 2 (d_G(y, u_l) \\ 
	\frac{3\delta}{1-3 \delta} d_G(x, y)) \\
	& \leq  (3 + \frac{12 \delta}{1 - 3 \delta})d_G(x, y),
	\end{align*}
	
	therefore the first item holds.
	
	If $(p_{i+1}(x), p_{i+1}(y)) \notin H$, then we know that 
	$d_{G \cup H} ^{(1)}(p_{i+1}(x), p_{i+2}(p_{i+1}(x))) \leq d_G(p_{i+1}(x), p_{i+1}(y))$. We can bound the distance $d_{G \cup H}^{(1)}(x, p_{i+2}(x))$ using the triangle inequality:
	
	\begin{align*}
	d&_{G \cup H}^{(1)}(x, p_{i+2}(x)) \\
	&\leq d_G(x, p_{i+1}(x)) + d_G(p_{i+1}(x), p_{i+2}(p_{i+1}(x))) \\
	&\stackrel{(\ref{eq:3})}{\leq}  2 d_{G}(p_{i+1}(x), x) + d_{G}(x, y) +  d_{G}(y, p_{i+1}(y)) \\
	& \leq 2 (d_G(x, u_l) + \frac{3\delta}{1-3 \delta} d_G(x, y)) + d_G(x, y) + d_G(y, u_l) 
	+ \frac{3\delta}{1-3 \delta} d_G(x, y)\\
	& \leq 3 d_{G}(x, y) + \frac{9\delta}{1-3 \delta} d_G(x, y) =\frac{3}{1-3 \delta}d_G(x, y).
	\end{align*}
	
	Thus the second item holds.
	
\end{proof}
We conclude by summarizing the main result of this section.
\begin{theorem}
	\label{hopset_theorem}
	For any weighted graph $G = (V, E)$ on $n$ vertices, and any $k \geq 1$, there exists $H$ of size at most $O(kn + n^{1+1/(2^k - 1)})$, which is a $(3+\varepsilon, \beta)$-hopset for any $0 < \varepsilon \leq 12$, with $\beta = 2(3 + 12/\varepsilon) ^ {k-1}$.
\end{theorem}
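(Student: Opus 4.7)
The plan is to derive Theorem~\ref{hopset_theorem} as a direct consequence of Lemma~\ref{3eps_hopset_lemma} together with the size bound of Lemma~\ref{lem:size_hopset_emu}, by choosing the parameter $\delta$ appropriately in terms of $\varepsilon$.

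First, the size claim is immediate: by Lemma~\ref{lem:size_hopset_emu}, $|H|=O(kn+n^{1+\nu})$ with $\nu=1/(2^k-1)$, which is exactly the bound stated. So the only work is the stretch/hopbound calibration.

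For the stretch, I would fix an arbitrary pair $x,y\in V$ and invoke Lemma~\ref{3eps_hopset_lemma} with index $i=k-1$. The key observation is that the second alternative of the lemma involves the pivot $p_{i+1}(x)=p_k(x)$, which does not exist because $A_k=\emptyset$ by construction. Hence only the first alternative can hold at level $k-1$, yielding
\[
d_{G\cup H}^{(2(1/\delta)^{k-1}-1)}(x,y)\le \left(3+\frac{12\delta}{1-3\delta}\right)d_G(x,y).
\]
Now given $\varepsilon$ with $0<\varepsilon\le 12$, set $\delta=\varepsilon/(12+3\varepsilon)$. A quick check shows $\delta\le 1/4$ iff $\varepsilon\le 12$, so the hypothesis of Lemma~\ref{3eps_hopset_lemma} is met. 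Plugging this $\delta$ in gives
\[
\frac{12\delta}{1-3\delta}=\varepsilon,\qquad \frac{1}{\delta}=3+\frac{12}{\varepsilon},
\]
so the multiplicative stretch is exactly $3+\varepsilon$, and the number of hops is at most $2(1/\delta)^{k-1}-1\le 2(3+12/\varepsilon)^{k-1}=\beta$.

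There is really no main obstacle here beyond the arithmetic of calibrating $\delta$: all the heavy lifting lives in the inductive Lemma~\ref{3eps_hopset_lemma}, and the base cases at the top of the hierarchy ($i=k-1$) are handled cleanly by the fact that $A_k=\emptyset$ forces the ``good'' alternative to fire. The only minor sanity check to include in the write-up is that the range $0<\varepsilon\le 12$ corresponds precisely to the admissible range $0<\delta\le 1/4$ of the lemma, which is why the theorem is stated for exactly this range of $\varepsilon$.
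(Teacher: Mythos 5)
Your proposal is correct and follows essentially the same route as the paper: invoke Lemma~\ref{3eps_hopset_lemma} at level $i=k-1$ with $\delta=\varepsilon/(12+3\varepsilon)$, note that $A_k=\emptyset$ rules out the second alternative, and combine with the size bound of Lemma~\ref{lem:size_hopset_emu}. The arithmetic calibration of $\delta$ (including the check that $0<\varepsilon\le 12$ corresponds to $0<\delta\le 1/4$) matches the paper's argument exactly.
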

\begin{proof}
	Let $x, y \in V$.
	Apply lemma \ref{3eps_hopset_lemma} for $x, y$ with $\delta = \frac{\varepsilon}{12 + 3\varepsilon}$ and $i=k-1$. Since $A_k = \emptyset$, the first item must hold:
	\[
	d_{G\cup H}^{(2(3 + 12/ \varepsilon)^{k-1})}(x, y) \leq (3 + \varepsilon) d_G(x, y).
	\]
\end{proof}

\begin{remark}
	Note that at its lowest, the hopbound is $O(4^k)$, achieved with stretch 15.
\end{remark}

\section{A $(3 + \varepsilon,\beta\cdot W)$-Emulator}\label{sec:3+eps-emul}

In this section we show that the same $H$ constructed in Section~\ref{sec:const} can also serve as a $(3 + \varepsilon,\beta\cdot W)$ emulator for weighted graphs, for all values of $0<\varepsilon<1$ simultaneously.

Let $G = (V, E)$ be a weighted graph 
and let $k \geq 1$ and $\Delta > 3$ be given parameters (think of $\Delta=3+O(1/\epsilon)$).
Fix a pair $x,y\in V$. Define $D_{-1}=0$ and for any integer $i\ge 0$, let $D_i = W(x,y) \cdot \sum_{j=0}^{i} \Delta ^ j$.
We can easily verify that
\begin{align}
\label{di_to_di1}
D_{i+1} = \Delta\cdot D_i  + W(x,y).
\end{align}


\begin{lemma}
	\label{3emu_lemma}
	Let $0 \leq i \leq k$ and let $x,y \in V$ such that $d_G(x,y) \leq D_{i}$ and $d_H(x, p_i(x)) \leq \frac{2 \Delta}{\Delta - 3} D_{i-1}$. Define $m = \max\{\Delta D_{i-1}, d_G(x, y)\}$. Then at least one of the following holds:
	\begin{enumerate}
		\item $d_H(x, y) \leq (3 + \frac{8}{\Delta - 3}) m$.
		\item $d_H(x, p_{i+1}(x)) \leq \frac{2 \Delta}{\Delta - 3} D_i$.
	\end{enumerate}
\end{lemma}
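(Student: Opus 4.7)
The plan is to run the same dichotomy as in Lemma~\ref{1epsemu_constant_lemma}: split into cases according to whether $p_i(y)$ lies in the bunch of $p_i(x)$, and in each case derive one of the two items by triangle inequality, plugging in the hypothesis on $d_H(x, p_i(x))$. A preliminary step, which is not in the hypothesis of the lemma but is forced by the construction, is to bound $d_H(y, p_i(y))$ from above; this is the only mildly subtle point, since we are not assuming the analogue of the second hypothesis for $y$. The bound follows because $H$ contains the direct edge $(y, p_i(y))$, so $d_H(y, p_i(y)) = d_G(y, p_i(y))$, and because $p_i(x) \in A_i$ forces $d_G(y, p_i(y)) \le d_G(y, p_i(x)) \le d_G(x, y) + d_H(x, p_i(x)) \le d_G(x,y) + \tfrac{2\Delta}{\Delta - 3}D_{i-1}$.

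In Case~1, assume $p_i(y) \in B(p_i(x))$. Then $H$ carries the direct edge $(p_i(x), p_i(y))$, of weight exactly $d_G(p_i(x), p_i(y))$. By the triangle inequality,
\[
d_G(p_i(x), p_i(y)) \le d_G(p_i(x), x) + d_G(x,y) + d_G(y, p_i(y)) \le 2\,d_G(x,y) + \tfrac{4\Delta}{\Delta - 3}D_{i-1}.
\]
Chaining $x \to p_i(x) \to p_i(y) \to y$ through $H$ and inserting the three bounds on the intermediate legs yields
\[
d_H(x,y) \le 3\, d_G(x,y) + \tfrac{8\Delta}{\Delta - 3}D_{i-1} \le \bigl(3 + \tfrac{8}{\Delta - 3}\bigr)m,
\]
using $d_G(x,y) \le m$ and $\Delta D_{i-1} \le m$. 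This is item~1.

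In Case~2, $p_i(y) \notin B(p_i(x))$, so by the bunch definition in~(\ref{eq:bunch}), $d_G(p_i(x), p_{i+1}(p_i(x))) \le d_G(p_i(x), p_i(y))$. Since $p_{i+1}(x)$ is the closest vertex of $A_{i+1}$ to $x$ and $H$ contains the direct edge $(x, p_{i+1}(x))$, we get $d_H(x, p_{i+1}(x)) = d_G(x, p_{i+1}(x)) \le d_G(x, p_i(x)) + d_G(p_i(x), p_i(y))$. Plugging in the same bound on $d_G(p_i(x), p_i(y))$ as above yields
\[
d_H(x, p_{i+1}(x)) \le 2\, d_G(x, y) + \tfrac{6\Delta}{\Delta - 3}D_{i-1}.
\]
Using $d_G(x,y) \le D_i$ and the recurrence $D_i = \Delta D_{i-1} + W(x,y)$ from~(\ref{di_to_di1}), a direct calculation (reducing to $\Delta + 3 \le 2\Delta$, which holds for $\Delta > 3$) gives $d_H(x, p_{i+1}(x)) \le \tfrac{2\Delta}{\Delta - 3}D_i$, which is item~2.

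The main obstacle is really just the bookkeeping: making sure the coefficient $\tfrac{2\Delta}{\Delta-3}$ propagates cleanly from level $i-1$ to level $i$ in Case~2 so that the lemma can be chained inductively in the theorem that follows. This is exactly what the recurrence (\ref{di_to_di1}) is calibrated to make work, and it is also what fixes the factor $2\Delta/(\Delta-3)$ (rather than a smaller constant) in the hypothesis.
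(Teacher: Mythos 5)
Your proposal is correct and is essentially the paper's own argument: the paper dresses the proof up as an induction on $i$, but its inductive step never actually invokes the inductive hypothesis, so it reduces to exactly your direct case split on whether $p_i(y)\in B(p_i(x))$, with the same triangle-inequality bounds on $d_H(y,p_i(y))$ and $d_G(p_i(x),p_i(y))$, the same chaining $x\to p_i(x)\to p_i(y)\to y$ in the first case, and the same use of the recurrence $D_i=\Delta D_{i-1}+W(x,y)$ from (\ref{di_to_di1}) to close the second case. The constants you obtain ($3d_G(x,y)+\tfrac{8\Delta}{\Delta-3}D_{i-1}$ and $2d_G(x,y)+\tfrac{6\Delta}{\Delta-3}D_{i-1}$) match the paper's, shifted by one in the index.
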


\begin{proof}
	The proof is by induction on $i$. For the base case $i=0$, if $y \in B(x)$, the first item holds. Otherwise $d_H(x, p_1(x)) \leq d_G(x, y) \leq W(x,y) = D_0$, thus the second item holds.
	
	Assume the claim holds for $i$ and prove for $i+1$. By the triangle inequality:
	\begin{align}
	\label{piy}
	d_H(y, p_{i+1}(y)) \leq d_G(y,x) + d_G(x, p_{i+1}(x)).
	\end{align}
	
	If $p_{i+1}(y) \in B(p_{i+1}(x))$, we have
	\begin{align}
	\label{between_pivots}
	d_H(p_{i+1}(x), p_{i+1}(y)) \leq d_G(p_{i+1}(x), x) + d_G(x, y) + d_G(y, p_{i+1}(y)).
	\end{align}
	
	Thus, the distance between $x$ and $y$ is
	\begin{align*}
	d_H&(x, y) \\
	& \leq d_H (x, p_{i+1}(x)) + d_H(p_{i+1}(x), p_{i+1}(y)) + d_H(p_{i+1}(y), y) \\
	& \stackrel{(\ref{between_pivots})}{\leq} 2 d_H(x, p_{i+1}(x)) + d_{G}(x, y) + 2 d_H(p_{i+1}(y), y) \\
	& \stackrel{(\ref{piy})}{\leq} 2 d_H(x, p_{i+1}(x)) + d_{G}(x, y) + 2(d_G(y,x) + d_G(x, p_{i+1}(x)))  \\
	& \leq 3 d_G(x, y) + \frac{4 \cdot 2 \Delta}{\Delta - 3} D_i\\
	&{\leq} \left(3 + \frac{8}{\Delta - 3}\right) m,
	\end{align*}
	therefore the first item holds.
	
	If $p_{i+1}(y) \notin B(p_{i+1}(x))$, then we know that
	
	\begin{equation}\label{eq:out-bunch}
	d_H(p_{i+1}(x), p_{i+2}(p_{i+1}(x))) \leq d_G(p_{i+1}(x), p_{i+1}(y))~.
	\end{equation}
	We can bound the distance $d_H(x, p_{i+2}(x))$ as follows.
	\begin{align*}
	d_H&(x, p_{i+2}(x)) \\
	& \leq d_G(x, p_{i+1}(x)) + d_G(p_{i+1}(x), p_{i+2}(p_{i+1}(x))) \\
	& \stackrel{(\ref{eq:out-bunch})}{\leq }2 d_{G}(p_{i+1}(x), x) + d_{G}(x, y) +  d_{G}(y, p_{i+1}(y)) \\
	& \stackrel{(\ref{piy})}{\leq} 2 d_H(x, p_{i+1}(x)) + d_{G}(x, y) + d_G(y,x) + d_G(x, p_{i+1}(x))  \\
	& \leq 2 d_G(x, y) + \frac{3 \cdot 2 \Delta}{\Delta - 3} D_i \\
	&\stackrel{(\ref{di_to_di1})}\leq 2 D_{i+1} + \frac{6}{\Delta - 3} D_{i+1} \\
	& = \frac{2 \Delta}{\Delta - 3} D_{i+1}.
	\end{align*}
	Hence the second item holds.
\end{proof}
We are now ready to state the result of this section.

\begin{theorem}
	\label{3emu_theorem}
	For any weighted graph $G = (V, E)$ on $n$ vertices, and any $k \geq 1$, there exists $H$ of size at most $O(kn + n^{1+1/(2^k - 1)})$, which is a $(3+\varepsilon, \beta\cdot W)$-emulator for any $\varepsilon>0$ with $\beta = O(1+1/\varepsilon) ^ {k-1}$.
\end{theorem}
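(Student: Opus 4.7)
My plan mirrors the structure of the proof of \theoremref{1epsemu_theorem}, using \lemmaref{3emu_lemma} as the main workhorse. Fix $x,y \in V$ with $W = W(x,y)$, set $\Delta = 3 + 8/\epsilon$ so that $3 + \tfrac{8}{\Delta-3} = 3+\epsilon$, and let $D_i = W\sum_{j=0}^i \Delta^j$. The target is to show $d_H(x,y) \le (3+\epsilon)\,d_G(x,y) + \beta W$ with $\beta = O(\Delta^{k-1}) = O(1+1/\epsilon)^{k-1}$. The size bound on $H$ is immediate from \lemmaref{lem:size_hopset_emu}.

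The core of the argument iteratively applies \lemmaref{3emu_lemma} through the levels $i = 0, 1, \dots, k-1$, maintaining as an invariant the pivot precondition $d_H(x, p_i(x)) \le \tfrac{2\Delta}{\Delta-3}D_{i-1}$ (trivially true at $i=0$). Whenever the lemma's preconditions are met at level $i$, either its first item fires, yielding $d_H(x,y) \le (3+\epsilon)\max\{\Delta D_{i-1}, d_G(x,y)\} \le (3+\epsilon)\,d_G(x,y) + \beta W$ so that we are done, or its second item fires, upgrading the invariant to level $i+1$. At $i = k-1$ the second item is impossible because it would involve $p_k(x)$, which does not exist since $A_k = \emptyset$; hence the first item must fire, closing the argument.

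The main obstacle is that \lemmaref{3emu_lemma} also requires $d_G(x,y) \le D_i$, which may fail at small levels (since $d_G(x,y)$ can greatly exceed $W = D_0$). To overcome this, I would locate the smallest $i^*$ with $d_G(x,y) \le D_{i^*}$ and inductively establish the pivot precondition at $i^*$: for each $j = 0, 1, \dots, i^*-1$, let $z_j$ be the farthest vertex on the $x$-$y$ shortest path with $d_G(x,z_j) \le D_j$, and apply \lemmaref{3emu_lemma} at level $j$ to the truncated pair $(x, z_j)$. Since $W(x,z_j) \le W$, the $D_j$ values remain consistent upper bounds, and the lemma's second item at level $j$ delivers the pivot precondition at level $j+1$ for $x$ itself (which depends only on $x$, not on $z_j$). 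From level $i^*$ onward, the precondition $d_G(x,y) \le D_i$ holds automatically and the iteration proceeds without obstruction up to $i = k-1$.

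The remaining case $d_G(x,y) > D_{k-1}$ is handled by greedily partitioning the shortest $x$-$y$ path into segments of $d_G$-length in $[\Delta D_{k-2}, D_{k-1}]$; this is feasible because every edge on the path weighs at most $W$ and $D_{k-1} - W = \Delta D_{k-2}$. Applying the preceding analysis at level $k-1$ to each segment yields pure $(3+\epsilon)$ multiplicative stretch per segment (the first item's $m$ equals $d_G(\text{segment}) \ge \Delta D_{k-2}$, so the potential $\beta W$ additive is absorbed into the multiplicative factor), and summing over segments gives $d_H(x,y) \le (3+\epsilon)\,d_G(x,y) + \beta W$, with any short trailing segment (if any) contributing at most one extra $\beta W$ term.
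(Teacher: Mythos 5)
Your plan has the right ingredients (iterating \lemmaref{3emu_lemma} over levels with truncated targets $z_j$, and paying the additive term only once at the end), but the bootstrap phase contains a genuine gap. For $j < i^*$ you apply \lemmaref{3emu_lemma} to the pair $(x,z_j)$ and rely on its \emph{second} item to push the pivot invariant up to level $j+1$. The lemma, however, only guarantees a disjunction: it may well be that only the \emph{first} item holds at some level $j<i^*$. In that case you obtain a low-stretch path from $x$ to the intermediate vertex $z_j\neq y$ and no bound whatsoever on $d_H(x,p_{j+1}(x))$, so the climb towards $i^*$ stalls and your argument does not say how to continue towards $y$. This is not a corner case; handling it is the crux of the proof, and the same gap recurs inside each segment of your final greedy partition.

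The missing idea is a ``commit and restart'' step. When the first item fires for a truncated target $z_j\neq y$, note that $z_j$ was chosen maximal with $d_G(x,z_j)\le D_j$ and every edge on $P_{xy}$ weighs at most $W$, so $d_G(x,z_j)> D_j-W=\Delta D_{j-1}$; hence $m=d_G(x,z_j)$ and the segment from $x$ to $z_j$ is traversed in $H$ with \emph{purely multiplicative} stretch $3+\frac{8}{\Delta-3}=3+\varepsilon$. One then sets $x:=z_j$, resets the level to $0$, and repeats. Summing the committed segments yields the multiplicative term $(3+\varepsilon)d_G(x,y)$, and only the final iteration --- the one in which $z=y$ --- can have $m=\Delta D_{i-1}\le\Delta D_{k-2}$ and thus contributes the single additive term $O(\Delta^{k-1})\cdot W=\beta\cdot W$. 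With this restart in place your separate partition for $d_G(x,y)>D_{k-1}$ becomes unnecessary, since arbitrary distances are handled uniformly; this interleaving of ``climb a level'' with ``commit and restart'' is exactly how the paper's proof proceeds.
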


\begin{proof}
	Let $x, y \in V$.	Recall that $P_{xy}$ is the shortest path between $x$ and $y$ in $G$, and fix $\Delta = 3 + 8/\varepsilon$.	
	Initialize $i = 0$. Let $z$ be the farthest vertex from $x$ in $P_{xy}$ satisfying $d_G(x,z) \leq D_i$. Note the requirement $d_H(x, p_0(x)) \leq \frac{2 \Delta}{\Delta - 3} D_{-1}=0$ holds since $p_0(x)=x$. Apply lemma \ref{3emu_lemma} on $x,z$ and $i$. If the second item holds, we increase $i$ by one, update $z$ to be the last vertex in $P(x, y)$ satisfying $d_G(x,z) \leq D_i$, and apply the lemma again for $x, z$ and $i$ (since the second item held for $i-1$, we have that $d_H(x, p_i(x)) \leq \frac{2 \Delta}{\Delta - 3} D_{i-1}$ indeed holds).
	
	Consider now the index $i$ such that the first item holds (we must find such an index, since at $i=k-1$ there is no pivot in level $k$). If it is the case that $d_G(x, y) \geq D_{i}$ then since $D_i - \Delta D_{i-1} = W(x,y)$, it must be that $d_G(x, z) \geq \Delta D_{i-1}$, as otherwise we could have taken a further away $z$ (recall that every edge on this path has weight at most $W(x,y)$). Therefore $m = d_G(x,z)$ and we found a path in $H$ from $x$ to $z$ with stretch at most $3 + \frac{8}{\Delta - 3}$. Next we update $x=z$, $i = 0$ and repeat the same procedure all over again.
	
	The last remaining case is that we found an index $i$ such that the first item holds but $d_G(x, y) < D_i$. Note that in such a case it must be that $z = y$. The path in $H$ we have from $x$ to $y$ is of length at most $(3+\frac{8}{\Delta - 3})\cdot D_i=(3+\varepsilon)\cdot D_i$. As $i\le k-1$ and $D_{k-1}\le 2\Delta^{k-1}\cdot W(x,y)$, we have that
	\[
	d_H(x,y)\le 2(3+\varepsilon)\cdot (3 + 8/\varepsilon) ^ {k-1}\cdot W(x,y)~,
	\]
	which is our additive stretch $\beta$.
	
\end{proof}

\section{Full proof of lemma \ref{lem:size_hopset_emu}}
\label{size_analysis_hop_emu}
If we order the vertices in $A_i$ by their distance to $u$, it is easy to see that the number of vertices which are in $A_i$ and closer than $p_{i+1}(u)$ is bounded by a random variable sampled from a geometric distribution with parameter $q_i$.
Hence $E[|B(u)|] \leq k + 1/q_i = k + n^{2^i\nu}2^{2^i + 1}$.
For $u \in A_{k-1}$, since $p_k(u)$ doesn't exist, $B(u)$ contains all the vertices in $A_{k-1}$.
The number of vertices in $A_{k-1}$ is a random variable sampled from binomial Distribution with parameters $(n, \prod_{j=0}^{k-2}q_j)=(n,n^{-(2^{k-1}-1)\nu} \cdot 2 ^ {-2^{k-1}-k+2})$.
Hence, the expected number of edges added by bunches of vertices in $A_{k-1}$ is
\begin{align*}
E\left[\binom{|A_{k-1}|}{2}\right] & \leq E[|A_{k-1}|^2] = E[|A_{k-1}|]^2  + Var(|A_{k-1}|) \\
& = n^2\prod_{j=0}^{k-2}q_j^2 + n(1 - \prod_{j=0}^{k-2}q_j)\prod_{j=0}^{k-2}q_j \\
& \leq n^{2 - 2(2^{k-1}-1)\nu} \cdot 2 ^ {2(-2^{k-1}-k+2)} + n\cdot 2^{-2^{k-1}-k+2} \\
& \leq (n^{1 + \nu} + n) 2 ^ {3-k}.
\end{align*}
Hence, the total expected number of edges in $H$ is:
\begin{align*}
&\sum_{i=0}^{k-2}(N_i \cdot n^{2^{i}\nu}\cdot 2 ^ {2^{i} +1}) + E[|A_{k-1}|^2]  + kn \\
&= \sum_{i=0}^{k-2}(n^{1+\nu}\cdot 2 ^ {-i+2}) + E[|A_{k-1}|^2]  + kn \\
&=O(kn + n^{1+\nu})
\end{align*}

\section{Proofs of theorems \ref{1epsspan_theorem},\ref{3span_theorem}}
\label{appendix:spanner_proofs}

\subsection{Size analysis}
Recall the construction of $H$ described in Section \ref{sec:spanner}. Define the bunch $B(u)$ as in (\ref{eq:bunch}). The following lemma will be useful to bound the size of the spanner $H$.
\begin{lemma}\label{lem:paths}
	Fix $0\le i\le k-1$. Let $u,v,x,y\in A_i$ be such that $v\in B_{1/2}(u)$ and $y\in B_{1/2}(x)$, and $P_{uv}\cap P_{xy}\neq\emptyset$, then all four points are in $B(u)$, or all four are in $B(x)$.
\end{lemma}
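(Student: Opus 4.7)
The key idea is that defining the spanner using \emph{half}-bunches gives us a factor of $2$ slack, which is exactly what is needed to guarantee that when two shortest paths $P_{uv}$ and $P_{xy}$ collide, all four endpoints lie in a common \emph{full} bunch. I will carry this out as follows.

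First, I may assume without loss of generality that $d_G(x,y)\le d_G(u,v)$ (otherwise swap the roles of $(u,v)$ and $(x,y)$ and aim for the conclusion that all four points lie in $B(x)$). Since $u,x\in A_i$ and $v\in B_{1/2}(u)$, $y\in B_{1/2}(x)$ with $v,y\in A_i$, the half-bunch definition (\ref{eq:hb}) gives
\[
d_G(u,v) < d_G(u,A_{i+1})/2 \qquad\text{and}\qquad d_G(x,y) < d_G(x,A_{i+1})/2.
\]
(The case that $v$ or $y$ is a higher-level pivot rather than a ``distance'' member of the half-bunch is excluded by the assumption that $v,y\in A_i$: if $v\in A_i$ belonged to $B_{1/2}(u)$ only through the pivot clause, then $v\in A_{i+1}$, giving $d_G(u,v)\ge d_G(u,A_{i+1})$, which is incompatible with the half-bunch distance inequality unless $u\in A_{i+1}$, in which case $u$ has no bunch at level $i$.)

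Next I pick any vertex $z\in P_{uv}\cap P_{xy}$. Since $z$ lies on the shortest path $P_{uv}$, we have $d_G(u,z)\le d_G(u,v)$, and since $z$ lies on the shortest path $P_{xy}$, we have $d_G(z,x)\le d_G(x,y)$ and $d_G(z,y)\le d_G(x,y)$. The triangle inequality through $z$ then yields
\[
d_G(u,x)\le d_G(u,z)+d_G(z,x)\le d_G(u,v)+d_G(x,y)\le 2\,d_G(u,v),
\]
and analogously $d_G(u,y)\le 2\,d_G(u,v)$. Combined with $d_G(u,v)<d_G(u,A_{i+1})/2$, this gives $d_G(u,x)<d_G(u,A_{i+1})$ and $d_G(u,y)<d_G(u,A_{i+1})$. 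Since $x,y\in A_i$, the definition of the full bunch (\ref{eq:bunch}) shows $x,y\in B(u)$. Finally, $u\in B(u)$ trivially (distance $0$) and $v\in B_{1/2}(u)\subseteq B(u)$, so all four points sit in $B(u)$, as required.

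I do not anticipate a major obstacle; the only subtlety is the pivot-vs-distance dichotomy in the definition of $B_{1/2}$, which is disposed of by the short observation above. The proof is essentially a one-line triangle-inequality calculation enabled precisely by the factor $1/2$ in the half-bunch threshold.
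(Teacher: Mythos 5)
Your proof is correct and follows essentially the same route as the paper's: the same w.l.o.g.\ choice of the longer path, the same triangle inequality through an intersection point $z$, and the same use of the factor-$2$ slack in the half-bunch threshold to conclude $x,y\in B(u)$ (with $u,v\in B(u)$ trivially). The only divergence is your parenthetical attempting to exclude the pivot clause, which is not actually a valid exclusion (a higher-level pivot $p_j(u)$ does lie in $A_i$ and in $B_{1/2}(u)$ without satisfying the distance inequality), but the paper's own proof makes the identical implicit assumption that the distance clause applies, so this does not distinguish your argument from theirs.
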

\begin{proof}
	Assume w.l.o.g. that $P_{uv}$ is not shorter than $P_{xy}$. Let $z\in V$ be a point in the intersection of the two shortest paths, then
	\begin{align*}
	d_G(u,x) &\le d_G(u,z)+d_G(z,x) \le d_G(u,v)+d_G(y,x) \\ 
	&\le 2 d_G(u,v)<d_G(u, A_{i+1})~,
	\end{align*} 
	
	so $x\in B(u)$. The calculation showing $y\in B(u)$ is essentially the same.
\end{proof}
Define the shortest path between two vertices consistently, s.t. each subpath is also a shortest path. Therefore two shortest paths can have at most one common subpath.

Fix $0\le i\le k-2$, and consider the graph $G_i$ containing all the shortest paths $P_{uv}$ with $u\in A_i$ and $v\in B_{1/2}(u)$. We claim that the number of edges in $G_i$ is at most $O(n+C_i)$, where $C_i$ is the number of pairwise intersections between these shortest paths. This is because vertices participating in at most 1 path have degree at most 2, and each intersection increases the degree of one vertex by at most 2 (recall that shortest paths can meet at most once).
\begin{claim}\label{claim:span}
	$\E[|C_i|]\le O(n^{1+\nu})$.
\end{claim}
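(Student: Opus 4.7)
The plan is to bound $|C_i|$ by a charging argument: assign each intersecting pair of shortest paths to an endpoint whose bunch contains all four endpoints, and then bound the expected cube of the bunch size using the geometric structure of $B(u)\cap A_i$.

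First, for each intersecting pair $\{P_{uv},P_{xy}\}$ counted by $C_i$, assume without loss of generality that $P_{uv}$ is the longer of the two (tie-breaking by vertex id). The proof of \lemmaref{lem:paths} actually establishes that $x,y\in B(u)$, while $v\in B_{1/2}(u)\subseteq B(u)$ and $u\in B(u)$ hold trivially (using $u\notin A_{i+1}$). Charge this pair to $u$. The number of pairs charged to a fixed $u\in A_i\setminus A_{i+1}$ is then at most the number of triples $(v,x,y)$ with $v\in B_{1/2}(u)$, $x\in B(u)\cap A_i$, and $y\in B_{1/2}(x)\cap B(u)$, which is crudely at most $|B(u)|^3$.

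Second, I bound $\E[|B(u)|^3\mid u\in A_i\setminus A_{i+1}]$. Ordering the vertices of $A_i\setminus\{u\}$ by increasing distance from $u$ (with lexicographic tie-breaking), each is independently in $A_{i+1}$ with probability $q_i$, and $|B(u)\cap A_i|$ equals the number of vertices preceding the first ``success''. Hence $|B(u)\cap A_i|+1$ is stochastically dominated by $\mathrm{Geom}(q_i)$, whose third moment is $O(1/q_i^3)$. The at most $k$ additional pivots $p_j(u)$ stored in $B(u)$ contribute only $O(k^3)$, giving $\E[|B(u)|^3\mid u\in A_i\setminus A_{i+1}]=O(1/q_i^3+k^3)$.

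Third, summing over $u$ and substituting the parameters,
\[
\E[|C_i|]\;\le\;\sum_{u\in V}\Pr[u\in A_i\setminus A_{i+1}]\cdot O(1/q_i^3+k^3)\;\le\;N_i\cdot O(1/q_i^3)+O(nk^3).
\]
Since $q_i^{-3}=n^{(4/3)^i\nu}$ and $N_i=n^{1-((4/3)^i-1)\nu}$, the first term evaluates to $n^{1-((4/3)^i-1)\nu+(4/3)^i\nu}=n^{1+\nu}$, while the $\tO(n)$ contribution from the pivot term is absorbed. This gives $\E[|C_i|]=O(n^{1+\nu})$.

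The main obstacle is the bookkeeping in the charging step: one must verify that every intersecting pair really is charged to a valid $u$ with all four endpoints in $B(u)$ (the WLOG orientation together with the refined conclusion of \lemmaref{lem:paths} handles this), and that the rough bound $|B(u)|^3$ is both a genuine upper bound and tight enough in $q_i$ to yield the claimed exponent after the cancellation $(1-((4/3)^i-1)\nu)+(4/3)^i\nu=1+\nu$. A secondary technical point is that the geometric-distribution argument must hold conditionally on $u\in A_i\setminus A_{i+1}$; this is fine since the level of $u$ is independent of the sampling of all other vertices into $A_{i+1}$.
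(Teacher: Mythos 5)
Your argument is essentially the paper's: charge each intersecting pair to the source of the longer path via Lemma~\ref{lem:paths}, bound the number of pairs charged to a fixed $u$ by $|B(u)|^3$, use the geometric distribution of the bunch size to get $\E[|B(u)|^3]=O(1/q_i^3)$, and sum to $N_i/q_i^3=O(n^{1+\nu})$; your handling of the WLOG orientation, the pivot contribution $O(k^3)$, and the conditioning on $u\in A_i\setminus A_{i+1}$ is, if anything, slightly more careful than the paper's. The one piece you omit is the top level $i=k-1$: there $A_k=\emptyset$, so $q_{k-1}$ is undefined, $B(u)$ contains \emph{all} of $A_{k-1}$, and the geometric-moment bound is unavailable. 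The paper's proof of this claim treats that level separately: since $N_{k-1}=n^{1/4}$, a Chernoff bound gives $|A_{k-1}|=O(n^{1/4})$ with very high probability, hence only $O(\sqrt{n})$ paths at that level and at most $O(n)$ pairwise intersections among them. Because paths with sources in $A_{k-1}$ do belong to $H$ and their intersections must be counted for the size bound, you should append this (short) case to your proof; it is a supplement rather than a flaw in your main argument.
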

\begin{proof}
	By Lemma \ref{lem:paths} each intersecting pair of paths $P_{uv}$ and $P_{xy}$ we have that all four points belong to the same bunch. Thus, each $u\in A_i$ can introduce at most $|B(u)|^3$ pairwise intersecting paths.
	Recall that $|B(u)|$ is a random variable distributed geometrically with parameter $q_i$, so
	\begin{align*}
	\E[|B(u)|^3]&=\sum_{j=1}^\infty j^3\cdot q_i\cdot(1-q_i)^{j-1} \\
	&\le q_i\cdot\sum_{j=1}^\infty (1-q_i)^{j-1}\cdot j(j+1)(j+2)\le\frac{6}{q_i^3}~.
	\end{align*}
	Thus the expected number of intersections at level $i$ is at most
	\begin{align*}
\E\left[\sum_{u\in A_i}|B(u)|^3\right] &\le O(N_i/q_i^3)=O(n^{1-((4/3)^i-1)\nu}\cdot (n^{4^i\nu/3^{i+1}})^3) \\
&=O(n^{1+\nu})~.
	\end{align*}

	It remains to bound path intersections in the last level $k-1$. Recall that
	\[
	N_{k-1}= n^{1-((4/3)^{k-1}-1)\nu}=n^{1-((4/3)^{k-1}-1)/((4/3)^k-1)}\le n^{(1 + \nu)/4}~.
	\]
	Since the random choices for each point are independent, we have by Chernoff bound that $\Pr[N_{k-1}>2n^{(1 + \nu)/4}]\le e^{-\Omega(n^{(1 + \nu)/4})}$, so with very high probability the last set $A_{k-1}$ contains $O(n^{(1 + \nu)/4})$ points. It means that we have $O(\sqrt{n^{1 + \nu}})$ paths connecting these points, and even if they all intersect, they can yield at most $O(n ^ {1 + \nu})$ intersections.
\end{proof}

It remains to bound the number of edges to pivots. For every $v \in V$, $p_i(v)$ is the closest vertex to $v$ in $A_i$ breaking ties by id. Therefore all the vertices in $P_{v,p_i(v)}$ share the same pivot at level $i$. Thus we add at most one edge for each vertex at every level, and $O(nk)$ edges overall.

We conclude that the size of $H$ is at most $O(k\cdot n^{1+\nu})$ (we can slightly change the probabilities by introducing a factor of $2^{-4^{i}/3^{i+1} - 1}$ to obtain size $O(kn+ n^{1+\nu})$, as we did before.

\subsection{Proof of theorem \ref{1epsspan_theorem}}

We use the corresponding analysis of the emulator from Section \ref{sec:emulator}. The use of half-bunches instead of bunches creates the following version of Lemma \ref{1epsemu_constant_lemma}.

\begin{lemma}\label{1epsspan_constant_lemma}
	Fix $\Delta >5$. Let $0 \leq i < k$ and let $x,y \in V$ such that $d_G(x,y) \geq (3 \Delta) ^ iW(x,y)$. Then at least one of the following holds:
	\begin{enumerate}
		\item $d_H(x, y) \leq (1 + \frac{8i}{\Delta - 5}) d_G(x, y)$
		\item $d_H(x, p_{i+1}(x)) \leq \frac{2\Delta}{\Delta - 5} d_G(x, y)$
	\end{enumerate}
	
\end{lemma}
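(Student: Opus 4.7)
The plan is to follow the structure of the proof of Lemma \ref{1epsemu_constant_lemma} almost verbatim, but to account for the stricter inclusion criterion of the half-bunch $B_{1/2}(u)$ (distance strictly less than $d_G(u,A_{i+1})/2$, rather than just $d_G(u,A_{i+1})$). I would prove the statement by induction on $i$, using the same partition of the shortest $x$-$y$ path into $J$ segments $\{L_j=[u_j,u_{j+1}]\}$ of length between $(3\Delta)^iW$ and $d_G(x,y)/\Delta$ as before.

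For the base case $i=0$, the three sub-cases are handled as in Lemma \ref{1epsemu_constant_lemma}, but when $x\in A_0\setminus A_1$ and $y\notin B_{1/2}(x)$, the half-bunch condition only gives $d_G(x,A_1)/2 \le d_G(x,y)$, and since $p_1(x)\in B_{1/2}(x)$ (as a higher-level pivot) the shortest path to it lies in $H$, so $d_H(x,p_1(x)) = d_G(x,p_1(x)) \le 2 d_G(x,y) \le \frac{2\Delta}{\Delta-5} d_G(x,y)$ for $\Delta>5$. Hence the second item holds.

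For the inductive step, I would apply the IH to each segment $L_j$ with parameter $i$. If every segment satisfies the first item the conclusion is immediate by summation, as in the emulator proof. Otherwise pick the leftmost and rightmost failing segments $L_l, L_{r-1}$; the second item of the IH gives $d_H(u_l, p_{i+1}(u_l)), d_H(u_r, p_{i+1}(u_r)) \le \frac{2\Delta}{\Delta-5}\cdot \frac{d_G(x,y)}{\Delta} = \frac{2 d_G(x,y)}{\Delta-5}$. Now split into two cases depending on whether $p_{i+1}(u_r)\in B_{1/2}(p_{i+1}(u_l))$. In the ``in'' case, the shortest path between these pivots lies in $H$, and bounding $d_H(u_l,u_r) \le 2 d_H(u_l,p_{i+1}(u_l)) + d_G(u_l,u_r) + 2 d_H(u_r,p_{i+1}(u_r))$ yields $d_H(u_l,u_r) \le d_G(u_l,u_r) + \frac{8 d_G(x,y)}{\Delta-5}$, from which the first item with coefficient $1+\frac{8(i+1)}{\Delta-5}$ follows by telescoping over the segments.

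In the ``out'' case, the half-bunch definition only yields $d_G(p_{i+1}(u_l), p_{i+2}(p_{i+1}(u_l))) \le 2 d_G(p_{i+1}(u_l), p_{i+1}(u_r))$; this extra factor of $2$ (compared to the emulator argument) is the reason for the strengthened hypothesis $\Delta>5$ and for the doubled coefficient $\frac{2\Delta}{\Delta-5}$ in the second item. Chaining with the triangle inequality gives
\[
d_H(x,p_{i+2}(x)) \le d_G(x,u_l) + 3 d_G(u_l,p_{i+1}(u_l)) + 2 d_G(u_l,u_r) + 2 d_G(u_r,p_{i+1}(u_r)),
\]
and plugging in the segment bounds yields $d_H(x,p_{i+2}(x))\le 2 d_G(x,y) + \frac{10 d_G(x,y)}{\Delta-5} = \frac{2\Delta}{\Delta-5} d_G(x,y)$, establishing the second item. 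The main obstacle is purely bookkeeping: one must verify that the doubled constant from the half-bunch at one inductive level still closes under the next application of the IH (which itself already carries a doubled constant), and that the lower bound $\Delta>5$ keeps all denominators positive; the calculations above show this indeed balances out to give exactly the stated coefficients $\frac{8i}{\Delta-5}$ and $\frac{2\Delta}{\Delta-5}$.
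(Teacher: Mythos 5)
Your proposal is correct and follows exactly the route the paper intends: the paper only sketches this proof, noting that the emulator argument of Lemma~\ref{1epsemu_constant_lemma} goes through once (\ref{eq:no-pivot-2}) is replaced by the half-bunch inequality $d_H(p_{i+1}(u_l), p_{i+2}(p_{i+1}(u_l))) \leq 2d_G(p_{i+1}(u_l), p_{i+1}(u_r))$, and you carry out precisely that adaptation. Your constant bookkeeping checks out (in particular $2 d_G(x,y) + \frac{10}{\Delta-5}d_G(x,y) = \frac{2\Delta}{\Delta-5}d_G(x,y)$), so this matches the paper's argument with the details filled in.
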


The main difference in the proof is in (\ref{eq:no-pivot-2}), which is replaced by
\[
d_H(p_{i+1}(u_l), p_{i+2}(p_{i+1}(u_l))) \leq 2d_G(p_{i+1}(u_l), p_{i+1}(u_r))~.
\]
The new bounds in the Lemma guarantee the calculations still go through. For Lemma \ref{1epsemu_beta_lemma} which takes care of small distances, we have the following change, with a very similar proof.

\begin{lemma}\label{1epsspan_beta_lemma}
	Let $0 \leq i < k$ and fix $x,y \in V$. Let 
	
	$ m =\max\{d_G(x, p_i(x)),d_G(y, p_i(y)),d_G(x,y)\}$. Then at least one of the following holds:
	\begin{enumerate}
		\item $d_H(x,y) \leq 5m$
		\item $d_H(x, p_{i+1}(x)) \leq 7m$
	\end{enumerate}
\end{lemma}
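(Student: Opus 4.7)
My plan is to mirror the proof of \lemmaref{1epsemu_beta_lemma} from the emulator setting, paying careful attention to the factor of $2$ introduced by using half-bunches instead of bunches. The two cases split on whether $p_i(y)$ lies in $B_{1/2}(p_i(x))$ or not.

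In the first case, $p_i(y) \in B_{1/2}(p_i(x))$, the shortest path $P_{p_i(x),p_i(y)}$ is included in $H$, so $d_H(p_i(x),p_i(y))=d_G(p_i(x),p_i(y))$. Since every vertex $v$ is connected in $H$ to its pivot $p_i(v)$ via the shortest $G$-path as well (because pivots are always added to half-bunches), the triangle inequality chain
$d_H(x,y)\le d_G(x,p_i(x))+d_G(p_i(x),p_i(y))+d_G(p_i(y),y)$
together with $d_G(p_i(x),p_i(y))\le d_G(p_i(x),x)+d_G(x,y)+d_G(y,p_i(y))$ yields the bound $5m$ exactly as in the emulator argument. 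This first case is essentially unchanged.

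In the second case, $p_i(y)\notin B_{1/2}(p_i(x))$, and this is where the half-bunch definition (\ref{eq:hb}) forces a factor $2$ loss: we only get $d_G(p_i(x), p_{i+1}(p_i(x)))\le 2\cdot d_G(p_i(x),p_i(y))$, instead of the tighter bound without the $2$ that the full-bunch definition would have given. Since $p_{i+1}(p_i(x))\in B_{1/2}(p_i(x))$ by construction, the shortest path between them is in $H$, so
\begin{align*}
d_H(x,p_{i+1}(x)) &\le d_G(x,p_i(x))+d_G(p_i(x),p_{i+1}(p_i(x)))\\
&\le d_G(x,p_i(x))+2d_G(p_i(x),p_i(y))\\
&\le d_G(x,p_i(x))+2\bigl(d_G(p_i(x),x)+d_G(x,y)+d_G(y,p_i(y))\bigr)\\
&\le 3m+2m+2m=7m,
\end{align*}
which is precisely the claimed bound in item 2.

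There is no serious obstacle: the argument is a direct transcription of \lemmaref{1epsemu_beta_lemma}, with the single change that the consequence of $p_i(y)$ lying outside $p_i(x)$'s (half-)bunch is weaker by a factor of $2$, accounting for the growth of the constant from $4m$ to $7m$. The only point one must verify carefully is that $p_{i+1}(p_i(x))$ is indeed in $B_{1/2}(p_i(x))$, which is immediate from the inclusion of all higher-level pivots in the half-bunch in (\ref{eq:hb}).
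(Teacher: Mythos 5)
Your proof is correct and follows essentially the same route as the paper, which itself obtains this lemma by rerunning the proof of Lemma \ref{1epsemu_beta_lemma} with the single change that $p_i(y)\notin B_{1/2}(p_i(x))$ only yields $d_G(p_i(x),p_{i+1}(p_i(x)))\le 2\,d_G(p_i(x),p_i(y))$, exactly the factor-2 loss you identify, turning $4m$ into $7m$ while the first case stays at $5m$. The implicit step that a bound on the $H$-path $x\to p_i(x)\to p_{i+1}(p_i(x))$ bounds $d_H(x,p_{i+1}(x))$ (because $p_{i+1}(x)$ is the nearest $A_{i+1}$-vertex to $x$ and its shortest path to $x$ lies in $H$) is glossed at the same level of detail as in the paper's own emulator proof, so there is no gap relative to the paper.
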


In the proof we set $\Delta=5+\frac{8(k-1)}{\varepsilon}$. The rest of the calculations follow analogously, one change is that when iteratively applying Lemma \ref{1epsspan_beta_lemma}, the bound $m$ increases by a factor of 7 (rather than 4, as in Lemma \ref{1epsemu_beta_lemma}), but as $7\le 3\Delta$ is still true, it does not change anything.

\subsection{Proof of theorem \ref{3span_theorem}}
The stretch analysis is very similar to that of the emulator from Section \ref{sec:3+eps-emul}, the main difference is in the use of half-bunches rather than the full ones, but this will increase the distance to pivots by a factor of 2, and affect the additive stretch only. 
We follow the analysis and notation presented in Section \ref{sec:3+eps-emul}, but with $\Delta>5$. We replace Lemma \ref{3emu_lemma} with the following.
\begin{lemma}
	\label{3span_lemma}
	Let $0 \leq i \leq k$ and let $x,y \in V$ such that $d_G(x,y) \leq D_{i}$ and $d_H(x, p_i(x)) \leq \frac{3\Delta}{\Delta - 5} D_{i-1}$. Define $m = \max\{\Delta D_{i-1}, d_G(x, y)\}$. Then at least one of the following holds:
	\begin{enumerate}
		\item $d_H(x, y) \leq (3 + \frac{16}{\Delta - 5}) m$.
		\item $d_H(x, p_{i+1}(x)) \leq \frac{4 \Delta}{\Delta - 5} D_i$.
	\end{enumerate}
\end{lemma}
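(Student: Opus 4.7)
The plan is to mirror the induction proof of Lemma \ref{3emu_lemma} from Section \ref{sec:3+eps-emul}, tracking carefully the extra factor-of-$2$ loss introduced by replacing bunches $B(u)$ with half-bunches $B_{1/2}(u)$. Induction is on $i$. For the base case $i=0$ we have $d_G(x,y)\le D_0=W(x,y)$; if $y\in B_{1/2}(x)$ then the shortest $x$--$y$ path is included in $H$, giving item~1, while if $y\notin B_{1/2}(x)$ then by definition of the half-bunch we have $d_G(x,A_1)/2\le d_G(x,y)$, so $d_H(x,p_1(x))\le d_G(x,p_1(x))\le 2d_G(x,y)\le \frac{4\Delta}{\Delta-5}D_0$ for $\Delta>5$, yielding item~2.

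For the inductive step, I would follow the emulator proof almost verbatim. Using the triangle inequality and the hypothesis $d_H(x,p_{i+1}(x))\le \tfrac{3\Delta}{\Delta-5}D_i$ (which will be delivered from iterating item~2 in the outer argument), bound $d_H(y,p_{i+1}(y))\le d_G(y,x)+d_G(x,p_{i+1}(x))$ as in (\ref{piy}). Then split into two cases based on whether $p_{i+1}(y)\in B_{1/2}(p_{i+1}(x))$. If it is in the half-bunch, the edge $(p_{i+1}(x),p_{i+1}(y))$ (or its path) is in $H$, and one gets the analog of (\ref{between_pivots}), so
\[
d_H(x,y)\le 2d_H(x,p_{i+1}(x))+d_G(x,y)+2d_H(y,p_{i+1}(y))\le 3 d_G(x,y)+\frac{6\cdot 3\Delta}{\Delta-5}D_i,
\]
which after using $d_G(x,y)\le m$ and $D_i\le m/\Delta$ yields $d_H(x,y)\le (3+\tfrac{16}{\Delta-5})m$ (item~1), up to verifying the constants actually work out to $16$; this is the routine calculation to grind through.

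If instead $p_{i+1}(y)\notin B_{1/2}(p_{i+1}(x))$, the main departure from the emulator analysis kicks in: by definition of the half-bunch, $d_G(p_{i+1}(x),p_{i+2}(p_{i+1}(x)))\le 2\cdot d_G(p_{i+1}(x),p_{i+1}(y))$ rather than the single copy used in (\ref{eq:out-bunch}). Using this weaker inequality together with (\ref{piy}) and the hypothesis on $d_H(x,p_{i+1}(x))$, bound
\[
d_H(x,p_{i+2}(x))\le d_G(x,p_{i+1}(x))+2\bigl(d_G(p_{i+1}(x),x)+d_G(x,y)+d_G(y,p_{i+1}(y))\bigr),
\]
and then collect terms to express the right-hand side as a multiple of $D_{i+1}=\Delta D_i+W(x,y)$ using (\ref{di_to_di1}) and $d_G(x,y)\le D_i\le D_{i+1}/\Delta$. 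The main obstacle will be the constant-chasing here: the factor-of-$2$ loss from the half-bunch forces the bound in the hypothesis to weaken from $\tfrac{2\Delta}{\Delta-3}D_{i-1}$ to $\tfrac{3\Delta}{\Delta-5}D_{i-1}$, and the target bound in item~2 similarly weakens from $\tfrac{2\Delta}{\Delta-3}D_i$ to $\tfrac{4\Delta}{\Delta-5}D_i$. I would therefore pick $\Delta>5$ and verify that plugging in $3\Delta/(\Delta-5)$ on the right-hand side, together with the factor of~$2$, closes the induction at $4\Delta/(\Delta-5)\cdot D_{i+1}$; this amounts to checking one linear inequality in the coefficients, which should hold for all $\Delta>5$.
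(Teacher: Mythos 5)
Your proof follows the paper's own route exactly: the paper proves this lemma by rerunning the argument of Lemma \ref{3emu_lemma} with (\ref{eq:out-bunch}) weakened to $d_H(p_{i+1}(x), p_{i+2}(p_{i+1}(x))) \leq 2d_G(p_{i+1}(x), p_{i+1}(y))$ and rechecking constants, which is precisely your plan, and the constant-chasing does close (in your second case one gets $5\,d_G(x,p_{i+1}(x)) + 4\,d_G(x,y) \le \frac{15}{\Delta-5}D_{i+1} + 4D_{i+1} = \frac{4\Delta-5}{\Delta-5}D_{i+1} \le \frac{4\Delta}{\Delta-5}D_{i+1}$). One small slip to fix when you grind the first case: the coefficient multiplying $\frac{3\Delta}{\Delta-5}D_i$ is $4$, not $6$ (two copies of $d_H(x,p_{i+1}(x))$ plus two more from bounding $d_H(y,p_{i+1}(y))$ via (\ref{piy})), giving $\left(3+\frac{12}{\Delta-5}\right)m$, which fits under the stated $\left(3+\frac{16}{\Delta-5}\right)m$, whereas your written $6$ would yield $18 > 16$ and fail to close.
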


The main difference in the proof is in (\ref{eq:out-bunch}), which is replaced by
\[
d_H(p_{i+1}(x), p_{i+2}(p_{i+1}(x))) \leq 2d_G(p_{i+1}(x), p_{i+1}(y))~,
\]
since we use half-bunches. One can then follow the calculations in the proof of Lemma \ref{3emu_lemma}, and check that the altered constants used in the 2 cases above suffice.

The proof of Theorem \ref{3span_theorem} is the same as the proof of Theorem \ref{3emu_theorem}, the only differences are taking $\Delta=5+\frac{16}{\varepsilon}$ (which affects the value of $\beta$) and using the bounds of Lemma \ref{3span_lemma} rather than of Lemma \ref{3emu_lemma}. 
\commentout{
\section{Almost All Pairs Shortest Paths for Weighted Graphs}\label{sec:dhz}

In this section we present an algorithm for computing approximate All Pairs Shortest Paths (APASP) for weighted graphs, with {\em purely} additive stretch proportional to the heaviest edge of the shortest path. The algorithm is based on \cite{DHZ00}.

We use two well known algorithms. The first one is \textbf{greedyHittingSet}$(V, S)$ which receives a set $V$ and a collection $S$ of subsets of $V$, and returns a subset of $V$ that hits every element in $S$.
It was shown the greedy implementation has a running time of $\tilde{O}(|S|\cdot |V|)$ and results a hitting set which is at most $\tilde{O}(1)$ bigger than the minimal (\cite{DBLP:journals/siamcomp/AingworthCIM99} the proof of  theorem 2.1). 

The other algorithm we use is \textbf{dijkstra}$((V, E), \hat{\delta}, u)$ for finding the distance from $u$ to all the other vertices in the graph $(V, E)$. The $n\times n$ matrix $\hat{\delta}$ holds the current best distance estimates, and the weight of an edge $e\in E$ is taken from that matrix.The algorithm will also update the entries of $\hat{\delta}$ in the row and column corresponding to $u$, if they are smaller than the previous values. The running time of the algorithm is $O(m + n\log{n})$.

Denote by $L_s(u)$ the set of $s$ lightest neighbors of $u$ (i.e., the neighbors which are connected to $u$ by one of the $s$ lightest edges touching $u$).

\begin{algorithm}
	\caption{\textbf{additiveAPASP}(G, k)}
	\SetKwInOut{Input}{input}
	\SetKwInOut{Output}{output}
	\Input{Weighted undirected graph $G = (V, E, w)$; $2 \leq k = O(\log{n})$}
	\Output{A matrix $\hat{\delta}$ of estimated distances.}
	\For{$i \gets 1$ \KwTo $k-1$}{
		$s_i \gets (m/n)^{1-\frac{i}{k}}$ \;
		$A_i \gets \{v \in V ~:~ \deg(v) \geq s_i\}$ \;
		$D_i \gets \textbf{greedyHittingSet}(V, \{L_{s_i}(v) ~:~ v \in A_i\})$ \;
		$E_{i+1} \gets \{\{u, v\} \in E ~:~ v \in L_{s_i}(u)\}$ \;
	}
	
	$E_1 \gets E$; $D_k \gets V$ \;
	
	For every $u, v \in V$ let $\hat{\delta}(u, v) \gets \begin{cases}
	w(u, v), & \text{if}\ \{u, v\} \in E \\
	\infty, & \text{otherwise}
	\end{cases} $ \;
	\For{$i \gets 1$ \KwTo $k$}{
		\For{$u \in D_i$}{
			run \textbf{dijkstra}$((V, E_i \cup (\{u\} \times V)), \hat{\delta}, u)$ \;
		}
	}
	
	\label{alg:apsp:additive}
	
\end{algorithm}

Denote by $W(p)$ the heaviest edge in the path $p$, and by $w(p)$ the total weight of all edges on $p$. Define $\hat{\delta}_i$ to be the distance estimation after running \textbf{dijkstra} for every $u \in D_i$.

We prove the additive stretch by induction. The main difference in our analysis, compared to the one of \cite{DHZ00}, is that we need to use the induction hypothesis on a path which is not necessary the shortest path between the two vertices. For that reason we are proving a stronger claim which applies to every path between two vertices.

\begin{lemma}
	\label{lemm:apsp:induction}
	Consider an execution of Algorithm \ref{alg:apsp:additive}. Let $1 \leq i \leq k$. For every $u \in D_i, v \in V$ and every path $p$ between $u$ and $v$, the following holds \[d_G(u, v) \leq \hat{\delta_i}(u, v) \leq w(p) + 2(i - 1) \cdot W(p).\] \end{lemma}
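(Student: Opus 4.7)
The plan is to prove the claim by induction on $i$. The lower bound $d_G(u,v) \le \hat{\delta}_i(u,v)$ is a straightforward invariant: every Dijkstra run is executed on a graph whose edge weights are either real weights of $G$ or current entries of $\hat{\delta}$, all of which are already at least the true $G$-distances, so no entry of $\hat{\delta}$ can ever fall below $d_G$. The upper bound is the substance. For the base case $i=1$, the matrix $\hat{\delta}$ is still at its initialization (original weights on $E$, $\infty$ elsewhere), so the virtual edges in $\{u\}\times V$ carry either their real weight or $\infty$, and Dijkstra from $u \in D_1$ on $(V, E \cup (\{u\}\times V))$ simply returns exact $G$-distances; thus $\hat{\delta}_1(u,v)=d_G(u,v)\le w(p)$ for every path $p$, matching the claim since $2(1-1)W(p)=0$.

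For the inductive step $i \ge 2$, the key structural observation is that a simple path from $u$ to $v$ in the Dijkstra graph $(V, E_i \cup (\{u\}\times V))$ can traverse $u$ only once, so it uses at most one edge of $\{u\}\times V$; consequently
\[
\hat{\delta}_i(u,v) \;\le\; \min_{z \in V}\bigl(\hat{\delta}(u,z) + d_{E_i}(z,v)\bigr),
\]
with $z=u$ recovering $d_{E_i}(u,v)$. If every edge of $p$ lies in $E_i$, this already gives $\hat{\delta}_i(u,v) \le d_{E_i}(u,v) \le w(p)$, and the case is done. Otherwise, let $(x,y)$ be the \emph{last} edge of $p$ missing from $E_i$, with $y$ the endpoint closer to $v$. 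Under the symmetric reading of the definition of $E_i$, an edge absent from $E_i$ forces both of its endpoints to have degree at least $s_{i-1}$, so in particular $y \in A_{i-1}$, and the hitting-set property of $D_{i-1}$ yields some $w^\ast \in D_{i-1} \cap L_{s_{i-1}}(y)$. The edge $\{y,w^\ast\}$ then lies in $E_i$ with $w(y,w^\ast) \le w(x,y) \le W(p)$, and the suffix $p_{y\to v}$ lies entirely in $E_i$ by our choice of $(x,y)$ as the last bad edge.

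Setting $z=w^\ast$, I will combine two estimates. Because $w^\ast \in D_{i-1}$ ran Dijkstra during iteration $i-1$ and updated both its row and column of $\hat{\delta}$, we have $\hat{\delta}(u,w^\ast) \le \hat{\delta}_{i-1}(w^\ast,u)$ when $u$'s Dijkstra begins at iteration $i$. Applying the induction hypothesis to $w^\ast \in D_{i-1}$, the vertex $u$, and the concrete path $q$ obtained by prepending the edge $(w^\ast,y)$ to the reversal of $p_{u\to y}$ (so $w(q)=w(w^\ast,y)+w(p_{u\to y})$ and $W(q)\le W(p)$), we get
\[
\hat{\delta}(u,w^\ast) \;\le\; W(p) + w(p_{u\to y}) + 2(i-2)\,W(p).
\]
On the $E_i$ side, the two-segment path $w^\ast \to y \to \cdots \to v$ (whose tail lies in $E_i$) bounds $d_{E_i}(w^\ast,v) \le W(p) + w(p_{y\to v})$. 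Summing and using the partition $w(p_{u\to y})+w(p_{y\to v})=w(p)$ gives exactly $\hat{\delta}_i(u,v) \le w(p) + 2(i-1)\,W(p)$, as required.

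The main subtlety, and the reason the stronger statement over \emph{all} paths is needed, is that the inductive appeal is made on the composite path $q$, which is typically not a shortest $w^\ast$--$u$ path; the bound $w(y,w^\ast)\le W(p)$ then controls $W(q) \le W(p)$ and comes from $w^\ast \in L_{s_{i-1}}(y)$ together with $x \notin L_{s_{i-1}}(y)$. A related technical point is the choice of the \emph{last} bad edge rather than the first: this alone guarantees a pure-$E_i$ tail $p_{y\to v}$, so that the single virtual hop from $u$ to $w^\ast$ suffices, with no need for a second detour (the Dijkstra graph has virtual edges only incident to $u$).
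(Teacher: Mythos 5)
Your proof is correct and follows essentially the same route as the paper's: induction on $i$, selecting the missing edge of $p$ closest to $v$, using the hitting set $D_{i-1}$ to find a light neighbor $w^\ast$ of that edge's $v$-side endpoint with $w(y,w^\ast)\le W(p)$, and applying the induction hypothesis to the composite (non-shortest) path back to $u$ — which is exactly why the lemma is stated for arbitrary paths. The only differences are cosmetic: you spell out the monotonicity of $\hat{\delta}$ and the fact that a simple path uses at most one virtual edge of $\{u\}\times V$, points the paper leaves implicit.
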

\begin{proof}
	The proof is by induction on $i$. For $i = 1$, running \textbf{dijkstra} for $u$ is the same as running \textbf{dijkstra} on the original graph since $E_1 = E$. Therefore $d_G(u, v) = \hat{\delta_1}(u, v)\le w(p)$ (for any path $p$).
	
	Assume the claim holds for $i - 1$ and prove for $i$.
	If all the edges of $p$ are in $E_i$, then when running \textbf{dijkstra} from $u$ the path $p$ exists and the claim holds.
	
	Otherwise, let $(x', x)$ be the edge in $p$ which is not in $E_i$ and is the closest to $v$.
	Because $(x', x)\notin E_i$, $x$ must have a degree greater than $s_{i-1}$, and thus have a neighbor $z \in D_{i-1}$ (since $x\in A_{i-1}$, and $D_{i-1}$ is a hitting set for it).
	Since the missing edge $(x',x)$ is not among the lightest $s$ edges touching $x$, while $(x,z)$ is, we get that
	\begin{align}
	\label{eq:xz_smaller_wp}
	w(x, z) \leq w(x, x') \leq W(p).
	\end{align}
	
	Denote by $p_{x, v}$ the subpath of $p$ from $x$ to $v$ and by $p_{u, x}$ the subpath of $p$ from $u$ to $x$.
	
	When running \textbf{dijkstra} algorithm from $u$, the graph contains the edges $(u, z) \in \{u\} \times V$ and $(z, x)\in E_{i}$. By the choice of the edge $(x, 'x)$ as the missing edge closest to $v$, all the edges of the path $p_{x,v}$ are also contained in $E_i$, thus
	\[
	\hat{\delta}_{i}(u, v) \leq 
	\hat{\delta}_{i-1}(u, z) + w(z, x) + w(p_{x, v}) \stackrel{(\ref{eq:xz_smaller_wp})}{\leq}
	\hat{\delta}_{i-1}(u, z) + W(p) + w(p_{x, v}).
	\]
	As $z \in D_{i-1}$, by applying the induction hypothesis for $\hat{\delta}_{i-1}(z, u) = \hat{\delta}_{i-1}(u, z)$ with the path $p_{u, x} \cup \{(x, z)\}$ we get
	\begin{align*}
	\hat{\delta}_{i}(u, v)
	& \leq  w(p_{u, x} \cup \{(x, z)\}) + 2(i - 2) \cdot W(p_{u, x} \cup \{(x, z)\}) +  W(p) + w(p_{x, v}) \\
	& \leq  w(p_{u, x}) + w(x, z) + 2(i - 2) \cdot W(p_{u, x} \cup \{(x, z)\}) +  W(p) + w(p_{x, v}) \\
	& \stackrel{(\ref{eq:xz_smaller_wp})}{\leq}  w(p_{u, x}) + W(p) + 2(i - 2) \cdot W(p) +  W(p) + w(p_{x, v}) \\
	& =  w(p_{u, x}) + w(p_{x, v})  + 2(i - 1) \cdot W(p) \\
	& =  w(p) + 2(i-1)W(p).
	\end{align*}
\end{proof}

We now state formally the result of this section.

\begin{theorem}
	Given a weighted graph $G=(V,E, w)$ with $n$ vertices and $m$ edges, there is a deterministic algorithm running in time $\tilde{O}(n^{2 - 1/k}m^{1/k})$, that returns distance estimations $\hat{\delta}$ such that for every $u, v \in V$,
	\[
	d_G(u, v) \leq \hat{\delta}(u, v) \leq d_G(u, v) + 2(k-1)\cdot W(u, v)~.
	\]
\end{theorem}

\begin{proof}
	For the distance estimation, since $D_k = V$, it can be inferred directly from Lemma~\ref{lemm:apsp:induction} for $i=k$ and the shortest path from $u$ to $v$.
	
	The proof for the running time of the algorithm is very similar to the proof of theorem 4.1 from \cite{DHZ00}.
	We first bound the sizes of $D_i$ and $E_i$.
	Every set in the collection given to \textbf{greedyHittingSet} in iteration $i$ is of size $s_i$, so $|D_i|=O(\frac{n\log{n}}{s_i})$. The size of $E_i$ is $O(n \cdot s_{i-1})$ as we take the lightest $s_{i-1}$ edges for every vertex.
	
	For every $2 \leq i \leq k$, we run \textbf{dijkstra} $|D_i|$ times on a graph with $O(|E_i| + n)$ edges.
	Thus, the running time of  \textbf{dijkstra} algorithm for all $u \in D_i$ is $\tilde{O}(|D_i| \cdot (|E_i| + n)) = \tilde{O}(\frac{n}{s_i} \cdot n \cdot s_{i-1}) =  \tilde{O}(n^{2-\frac{1}{k}}m^{\frac{1}{k}})$. For $i=1$, we run \textbf{dijkstra} on the full graph $\tilde{O}(\frac{n}{s_1})$ times, which results the same running time. As $k = O(\log n)$, the total running time of all execution of dijkstra algorithm is $\tilde{O}(n^{2-\frac{1}{k}}m^{\frac{1}{k}})$.
		The hitting sets computations can be done in time $\tilde{O}(n ^ 2 + m)$ by first sorting the edges by weight for every vertex.
\end{proof}
}
\end{document}